\pgfplotsset{compat=1.17}
\newcommand{\ubar}[1]{\underaccent{\bar}{#1}}
\newtheorem{theorem}{Theorem}[section] 
\numberwithin{theorem}{section}
\newtheorem{observation}{Observation}
\theoremstyle{definition}
\newtheorem{definition}{Definition}
\theoremstyle{remark}
\newtheorem{remark}{Remark}
\newtheorem{example}{Example}
\begin{document}

\title{Smart contracts and reaction-function games%
\thanks{%
    We thank Erik Ansink, Hans Carlsson, Harold Houba, Peter Bro Miltersen, Alexandros Rigos, Nikolaj I. Schwartzbach, Mich Tvede, Peyton Young,  Lars Peter {\O}sterdal and seminar participants at Copenhagen Business School, Adam Smith Business School, Guangzhou University, and Chinese University of Hong Kong for valuable comments.
    Financial support from the Independent Research Fund Denmark (grant no. 4260-00050B) is gratefully acknowledged.}
}
\author{Jens Gudmundsson$^{\text{c}}$ \and Jens Leth Hougaard}
\date{{\small %
    Department of Food and Resource Economics, University of Copenhagen, Denmark \\
    $^\text{c}$Corresponding author: Rolighedsvej 23, 1958 Frederiksberg C, Denmark; jg@ifro.ku.dk}
    \\[2ex] \today} 
\maketitle

\begin{abstract}
    Blockchain-based smart contracts offer a new take on credible commitment, where players can commit to actions in reaction to actions of others.
    Such \emph{reaction-function games} extend on strategic games with players choosing reaction functions instead of strategies.
    We formalize a solution concept in terms of fixed points for such games, akin to Nash equilibrium, and prove equilibrium existence.
    Reaction functions can mimic ``trigger'' strategies from folk theorems on infinitely repeated games---but now in a one-shot setting---for instance to support Pareto-improvements on Nash equilibrium outcomes. %
    In some games, this can even be done through risk-free, \emph{safe} reaction functions.
    We apply our theoretical framework to symmetric investment games, which includes two prominent classes of games, namely \emph{weakest-link} and \emph{public-good} games.
    In both cases, we highlight a particular safe and optimal reaction function.
    In this way, our findings highlight how blockchain-based commitment can help overcome trust and free-riding barriers.
\end{abstract}

\newpage
\section{Introduction}

Smart contracts are digital tools that allow agents to coordinate in a decentralized environment \citep[]{Gans2021, BakosHalaburda2023, GudmHougRecip, GUdHouKo2024,GudmundssonHougaardSethuraman25}.
In short, a smart contract is a piece of computer code with a unique address on a blockchain.
Agents with shared access to the blockchain can deploy such contracts, deposit funds to them, and hard-code how these funds are used in reaction to choices made by others.
As such, an agent can let a contract act automatically on her behalf, in effect making contract design a strategic choice.
Once deployed, a contract cannot be changed.
In this way, smart contracts ensure commitment as strong as the cryptographic security of the network itself \citep[e.g.][]{LiangEtAl2024,Raskin2017}.

Smart contracts can thus serve as commitment devices in one-shot interactions:
rather than choosing an action directly, a player can commit to a conditional action plan that prescribes what to do as a function of what others do.
We model such plans as \emph{reaction functions} (i.e., mappings from the actions chosen by the others to your own actions) and study the resulting \emph{reaction-function games}.
As such, we follow up on a line of literature that mixes ideas from game theory and computer science, dating back to von Neumann and further developed through ``metagames'' \citep{Howard1971}, ``program equilibrium'' \citep{Tennenholtz2004}, and ``commitment games'' \citep{Kalaietal2010}.
Once players commit to reaction functions, an outcome is implementable when the commitments are mutually consistent:
that is, the realized action profile is a fixed point of the reaction-function profile.
A profile of reaction functions is a \emph{reaction-function equilibrium} if no player can benefit by unilaterally changing her committed reaction function, taking into account the resulting fixed-point outcomes (the formal definition is found in Section~\ref{SEC:gamesDefs}).

Our main message is positive:
we show that reaction-function play allows players to reach better outcomes.
As smart-contract commitment is particularly viable in investment problems, made concrete through escrowed deposits and automated transfers, we devote particular attention to \emph{symmetric investment games}.
If investments are strategic complements (players want to invest more the more others invest), then reaction functions aid in solving coordination and trust issues;
we illustrate this in \emph{weakest-link games} \citep[e.g.][]{Riedletal2016}.
On the other hand, if free-riding incentives are present (players want others to invest but prefer not to invest themselves), then reaction functions open for a form of conditional collaboration, this time illustrated in \emph{public-good games} \citep[e.g.][]{BergstromEtAl1986}.

In addition, we provide the following theoretical foundations: 
\begin{enumerate}[noitemsep]
    \item We establish existence of equilibrium in every reaction-function game.
    This is possible without having to invoke, for instance, mixed strategies.
    \item In two-player games, we characterize the outcomes that can be supported in reaction-function equilibrium.
    This provides a one-shot analogue of the familiar ``folk theorem'' logic.
    The key insight is that smart-contract commitment allows implementation of Pareto-efficient outcomes that are not Nash equilibria in the conventional strategic game.
    \item With more than two players, the set of supportable outcomes becomes very large:
    reaction functions offer, in some sense, too much flexibility.
    In response, we develop a refinement based on \emph{safe play}.
    In spirit, this rules out ``implausible'' commitments, much like equilibrium refinements rule out non-credible threats.
\end{enumerate}

In conventional strategic games, a player can always ``play safe'' by choosing a maxmin action that guarantees the highest possible worst-case payoff.
The downside is that this may yield overly conservative outcomes.
For instance, in the \emph{Prisoner's dilemma}, safe play leads to the unique Pareto-dominated Nash equilibrium.
In reaction-function games, we can analogously restrict attention to safe reaction functions in the sense that, no matter the action profile of the others, the player commits to react in a way that ensures herself at least maxmin payoff.
But reaction functions provide more flexibility:
one can simultaneously play safe \emph{and} support high-payoff outcomes.
We find that, if safe play is a Nash equilibrium in the strategic game, then we can also support all potential Pareto improvements of this outcome in safe reaction-function equilibrium. 
For instance, Pareto-efficient cooperation can be made a safe reaction-function equilibrium in the \emph{Prisoner's dilemma}.

To emphasize the potential of smart-contract commitment, we hone in on symmetric investment games.
These provide a natural ground for application of our theory.
Smart contracts make conditional investment commitments concrete through escrowed deposits and automated transfers;
``safe play'' becomes simple and interpretable;
and investment choices have a natural order (higher investment is ``more'').
Accordingly, focusing on monotone reaction functions both captures economically plausible behavior and makes fixed-point existence and computation straightforward.
We leverage this structure and introduce two novel conditions on reaction functions---\emph{norm-proofness} and \emph{payoff consistency}---to obtain sharp, practical recommendations for two prominent classes of investment games.

In \emph{weakest-link games}, each player wants to match the minimum investment of the others, and the higher this is, the better for all \cite[e.g.][]{Riedletal2016}.
There appears to be an obvious solution---all invest the maximum amount---yet there is overwhelming experimental evidence that players fail to coordinate on the efficient equilibrium \citep{vanHuycketal1990,Battalioetal2001,Weberetal2001,BrandtsCooper2006,Weber2006}, with some rare exceptions \citep{EngelmannNormann2010,Riedletal2016,Riedletal2021}.
The mere belief or expectation that someone else will make the safe minimum investment is all it takes to justify making the minimum investment yourself \citep[e.g.][]{Aumann1990}.
Here, reaction-function commitment offers a straightforward resolution:
each player commits to match the minimum investment of the others, which includes a promise of high investment if all others do the same.
Interestingly, this type of ``uniform commitment'' has already found support elsewhere in the literature, for instance on climate negotiations \citep{Weitzman2014,SchmidtOckenfels2021}.
With ``high-risk'' parameters, we can go yet further:
the same reaction function is then singled out as weakly dominant against monotone play.
Hence, reaction-function play provides a very compelling way to overcome the trust issues inherent in the weakest-link game.

In \emph{public-good games}, each player's investment benefits everybody but is a net loss to the player herself \citep[e.g.][]{BergstromEtAl1986,Falkingeretal2000,Croson2007}.
Again, coordination around investing the maximal amount is welfare maximizing, but choosing to invest the minimum amount is safe and now even dominant.
Here, we identify reaction functions in which each player matches the average (rounded up or down) of the others' investments.
We argue further that rounding down is particularly appealing:
it is safe for all parameter values and supports the high-investment outcome in a symmetric reaction-function equilibrium.
Moreover, this type of monotone commitment makes free-riding self-defeating:
if all others match the average rounded down, then a unilateral deviation to never invest collapses the profile to the zero-investment fixed point.

\emph{Implementation sketch.}
We briefly touch on how to take reaction-function theory to practice using smart-contract investment commitments.
At a high level, each player deploys a contract detailing her reaction function and escrows a deposit that bounds her maximum investment.
In this way, we combine commitment (the reaction function cannot be changed after deployment) with automated execution (escrowed funds can be transferred without further interaction).
A simple coordinating contract can then
\begin{enumerate}[noitemsep]
    \item collect commitments and deposits,
    \item compute a fixed point of the submitted reaction functions, and
    \item execute the implied investments (returning unused deposits).
\end{enumerate}
The main practical bottleneck is the fixed-point computation.
However, for economic applications with natural classes of reaction functions (e.g., monotone functions on a finite lattice), simple iterative procedures will suffice and quickly pin down a fixed point.

\emph{Related literature.}
Our approach is related to a strand of literature in which players choose commitment devices or programs rather than actions, including \emph{metagames} \citep{Howard1971}, \emph{program equilibrium} \citep{Tennenholtz2004,MondererTennenholtz2009,Oesterheld2019,CooperEtAl2025}, and \emph{commitment games} \citep{Kalaietal2010}.
In these approaches, a player's ``commitment device'' typically specifies what to do as a function of the devices chosen by others \citep[see e.g.][Section~5, for details]{Kalaietal2010}.
As a consequence, even when the underlying action spaces are simple, the induced strategy space of devices is extremely rich.
The objects to be chosen are very large---even in a two-player, two-action game, there will be an infinite number of devices/programs, and each device is itself made out of an infinite number of choices (what to do against every possible device).
For contrast, our reaction functions condition plainly on \emph{actions} of others;
in a two-by-two game, there are only four reaction functions.
In this way, strategic choices are kept parsimonious and interpretable, and practical implementation through simple contracts becomes tractable.

Reaction functions are also superficially related to the experimental ``strategy method'' \citep{Selten1967,Seltenetal1997,BrandtsCharness2011}, there sometimes known as ``contribution tables'' \citep{Fischbacheretal2001,ThoniVolk2018}.
However, in this field, the goal is merely to \emph{elicit} reaction functions (e.g. to get a more comprehensive view of the experimental subject's though process and test ``off-path'' behavior)---there is no need to play them out against each other.
To illustrate the methodological difference, in \citet{Fischbacheretal2001} all subjects submit a reaction function \emph{and} an unconditional strategy (corresponding to a constant reaction function), after which one player is drawn at random and the realized outcome is uniquely determined (compare Remark~\ref{REM:uniqueness}) by that player's reaction to the others' unconditional strategies. 
The method generates more data to analyze, but issues such as fixed-point computation, existence, and multiplicity are irrelevant.
For contrast, these issues are at the core of our analysis.
An exception is \citet{Oechssleretal2022}, who develop a theoretical model and test its implications experimentally.
They explore a repeated public-good game and limit to monotone two-step reaction functions (circumventing any fixed-point issues);
see also \citet{ReischmannOechssler2018}.

Finally, there is a growing literature on smart contracts as commitment devices in dynamic games.
\citet{Brzustowski-AER} show that, giving a monopolist access to such credible dynamic contracts overturns the classic Coase conjecture and makes it possible to achieve payoff bounded away from the buyer's lowest valuation for any choice of discount factor.
Relatedly, so-called ``Stackelberg games'' have recently gained new attention in the computer science literature.
Here, the follower can strategically deploy a reaction function that credibly commits to irrational responses in their subgame, thereby influencing the leader's optimal choice of action to the follower's own advantage \citep[e.g.][]{Schwartzbach}.

\emph{Outline.}
In Section~\ref{SEC:gamesDefs}, we introduce reaction-function games and compare them to conventional strategic games.
In Section~\ref{SEC:supportedOutcomes}, we explore the set of outcomes that can be supported in reaction-function equilibrium.
In Section~\ref{SEC:safe}, we develop the notion of ``safe'' reaction functions.
In Section~\ref{SEC:investmentGames}, we present our main application to symmetric investment games.
Section~\ref{SEC:conclusions} closes with concluding remarks.
Proofs are postponed to the Appendix.

\section{Games and definitions} \label{SEC:gamesDefs}

In this section, we introduce \emph{reaction-function games} and relate them to conventional \emph{strategic games}.
In strategic games, formalized in Subsection~\ref{SUB:SG}, players individually select actions that jointly result in an outcome over which players have preferences. 
The classic solution concept is Nash equilibrium. 
In reaction-function games, players individually specify their reaction to the actions of other players and are able to commit to these reactions. 
Determining the outcome of a reaction-function profile is now more intricate and consequently defining the game and a natural equilibrium concept requires further elaboration.
This is covered in Subsections~\ref{SUB:RF}--\ref{SUB:RFE}.
We assume players have complete information on the game.

\subsection{Strategic games} \label{SUB:SG}

There is a finite set of $n$ \textbf{players}~$N$.
Each player $i$ has a finite set of at least two \textbf{actions}~$A_i$. 
An \textbf{outcome} $a \in A \equiv \times_i A_i$ is an action profile.
Each player $i$'s preference is represented by a \textbf{payoff function} $u_i \colon A \to \mathbb{R}$, where $i$ derives payoff $u_i(a)$ at outcome~$a$. 
The triple $(N,A,(u_i)_i)$ defines a strategic game.
A \textbf{Nash equilibrium} is an action profile $a$ such that, for each player~$i$ and action~$a'_i \in A_i$, $u_i(a) \geq u_i(a'_i,a_{-i}) $.%
\footnote{As usual, the subscript $-i$ projects onto the subspace relative to $N \setminus \{i\}$ and $a = (a_i, a_{-i})$.}
Let $A_{-i} \equiv \times_{j \neq i} A_j$.
Note that existence of a Nash equilibrium is not guaranteed as we do not consider mixed strategies.

\subsection{Reaction functions} \label{SUB:RF}

In reaction-function games, players choose \emph{reaction functions} rather than actions.
A \textbf{reaction function} $R_i \colon A_{-i} \to A_i$ specifies the action $R_i(a_{-i}) \in A_i$ that player~$i$ commits to play in reaction to the actions $a_{-i} \in A_{-i}$ of the other players.
That is, players now map out a complete plan of what to do contingent on the actions of the others.
Let $\mathcal{R}_i$ be the set of player $i$'s reaction functions.
As action sets are finite, so is the set of reaction functions~$\mathcal{R}_i$. 
A profile of reaction functions, a \textbf{profile} for short, is $R \in \mathcal{R} \equiv \times_{i \in N} \mathcal{R}_i$.
For $a \in A$, let $R(a) \in A$ be the action profile in which each player $i$ plays action $R_i(a_{-i})$. Letting players choose reaction functions enlarges the strategy space considerably compared to strategic games: there are $\left\vert{\mathcal{R}}\right\vert = (m^n)^n$ reaction-function profiles versus 
$\left\vert{A}\right\vert = m^n$ outcomes (i.e., action profiles). 

If players are restricted to degenerate, \emph{constant} reaction functions, then we recover strategic games as defined in Subsection~\ref{SUB:SG}.
These comprise choosing the same action $a_i$ irrespective of the choices $a_{-i}$ of the others:
set $R_i(a_{-i}) = a_i$ for each $a_{-i} \in A_{-i}$.
In economic applications with decision variables such as quantities and prices, it may be natural to assume an order $\leq_i$ on each action set~$A_i$.
Constant reactions then belong to the larger class of \emph{monotone} (non-decreasing) reactions for which $a_{-i} \leqq a'_{-i}$ implies $R_i(a_{-i}) \leq R_i(a'_{-i})$, where $a_{-i} \leqq a'_{-i}$ whenever $a_j \leq_j a'_j$ for each player $j \neq i$.
For example, in games with strategic complements \citep[e.g.][]{Topkis1979,Vives1990, Echenique2004}, where an increase in one player's action makes the other players want to increase theirs, the \textit{best-reply} reaction $\textit{BR}_i(a_{-i}) \in \arg \max_{a_i \in A_i} u_i(a_i,a_{-i})$ is monotone.

\subsection{Fixed points in reaction-function games} \label{SUB:preferences}

Reaction functions are commitments to certain actions in reaction to the actions of the others.
In this way, an outcome can be realized if and only if player commitments align:
for each player~$i$, the action $a_i$ must match $i$'s committed reaction to the actions $a_{-i}$ of the others.
Put differently, outcome $a$ is a fixed point of the profile $R$ in the sense that, for each player $i$, $a_i = R_i(a_{-i})$.
Let $\mathcal{E}(R) = \{ a \in A : R(a) = a \}$ be the set of fixed points of $R$ (which may be empty).
Our approach builds on the following two assumptions:
\begin{enumerate}[noitemsep]
    \item Player $i$ evaluates profile $R$ by $i$'s preferred fixed point at~$R$.
    \item All fixed-point outcomes are preferred to the case in which there is no fixed point.
\end{enumerate}
More specifically, we extend the definition of payoffs to reaction-function profiles as follows.
For each player $i$ and profile~$R$, let
\[
    U_i(R) = \begin{cases}
        \max_{a \in \mathcal{E}(R)} u_i(a) & \text{if $\mathcal{E}(R) \neq \emptyset$} \\
        - \infty & \text{if $\mathcal{E}(R) = \emptyset$.}
    \end{cases}
\]

The first assumption captures player optimism, familiar for instance from games with externalities \citep[e.g.][]{AumannPeleg1960,Shenoy1979}.
The second addresses profiles $R$ for which $\mathcal{E}(R) = \emptyset$.
Such ``no fixed point'' profiles can arise (e.g., best-reply reactions in the game of \emph{Matching pennies}, see Table~\ref{TAB:MP} later) and we need a convention for how players evaluate them.
Our choice $U_i(R) = - \infty$ should be understood purely as a normalization:
it means that a profile without fixed points cannot be implemented and is therefore ranked below profiles with fixed points.
Importantly, $- \infty$ does not represent an actual ``punishment'' imposed by any player or by the mechanism;
it is simply a convenient way to encode that, absent a fixed point, the model does not specify a realized outcome.
In some contexts, this ranking is perfectly suitable:
for instance, in a surplus-sharing or bargaining setting, any implementable agreement may be preferred by all to failing to agree.
The assumption is not needed to establish equilibrium existence (Theorem~\ref{TH:existence}) nor is it necessary if one instead imposes suitable restrictions on the permissible reaction functions (compare Section~\ref{SEC:investmentGames}).

For some classes of reaction functions, fixed points always exist.
For instance, if each reaction $R_i$ is monotone, %
then $R$ is a monotone function on a complete lattice and the set of fixed points $\mathcal{E}(R)$ is non-empty \citep{Tarski1955}.
Remark~\ref{REM:uniqueness} takes this one step further and explores fixed-point uniqueness. 

\begin{remark}[Sufficient condition for unique fixed point] \label{REM:uniqueness}
    We can associate to each profile $R$ a directed graph $(N,E)$ on nodes $N$ in which there is an edge $ij \in E$ whenever $i$'s reaction $R_i$ depends on (i.e., is not constant in) $j$'s action~$a_j$.
    If the graph is acyclic, then there is a unique fixed point.
    This applies for instance if there is some choice sequentiality or player hierarchy that makes a player's reaction depend only on the actions of those who follow (or precede) the player.
    Constant reactions produce empty (so acyclic) graphs;
    also the construction leading up to Theorem~\ref{TH:existence} will be acyclic.
    For cyclical graphs, there can be any number of fixed points (possibly none);
    see for instance the proof of Theorem~\ref{TH:moreplayers}.
    \hfill $\circ$
\end{remark}

To summarize, a reaction-function game $(N, \mathcal{R}, (U_i)_i)$ comprises a set of players~$N$.
Each player $i$ has access to a set of reaction functions $\mathcal{R}_i$ and is endowed with a payoff function $U_i \colon \mathcal{R} \to \mathbb{R} \cup \{ -\infty \}$.
This is induced by the payoffs $u_i$ on $A$ and specifically determined by $i$'s preferred fixed point when such a point exists.

\subsection{Reaction-function equilibrium} \label{SUB:RFE}

Next, we develop our solution concept for reaction-function games.
As players evaluate profiles by their subjectively preferred fixed point, different players may evaluate the same profile by different outcomes.
We will rule out this in equilibrium.
That is, an equilibrium profile $R$ will be \emph{unambiguous}, meaning that there is a Pareto-superior fixed point in $\mathcal{E}(R)$ and hence a clear common outcome to associate to the profile.
This is not a restriction on what players can play but rather a strengthening of the solution concept to be defined next (akin to restricting to symmetric equilibria in symmetric games).%
\footnote{There are other ways to achieve the same effect:
for instance, one could assume players have agreed beforehand on a ``selection function'' that mechanically selects a fixed point from $\mathcal{E}(R)$ or explicitly model player beliefs about what outcome will be realized.
Our approach is simple, interpretable, and quite restrictive;
the other approaches could lead to yet more equilibria.
As we shall see, equilibrium multiplicity is already a concern with our stronger assumption.}

\begin{definition}
    Profile $R \in \mathcal{R}$ is \textbf{unambiguous} if there is an outcome $a \in \mathcal{E}(R)$ such that, for each outcome $a' \in \mathcal{E}(R)$ and player~$i$, $u_i(a) \geq u_i(a')$.
\end{definition}

For example, the profile in which each player matches the action of the other player is unambiguous in the \emph{Prisoner's dilemma} (Table~\ref{TAB:PD}) but not in the \emph{Battle of the sexes} (Table~\ref{TAB:BoS}).

\begin{table}[!htb]
    \renewcommand{\arraystretch}{1.2}
    \centering
    $\begin{array}{r | cc}
          &  x  &  y  \\ \hline
        x & \underline{1},\underline{2} & 0,0 \\
        y & 0,0 & \underline{2},\underline{1}
    \end{array}$
    \caption{Payoff matrix for \emph{Battle of the sexes}.
    The ``match other'' reactions, which coincide with best replies, are highlighted.
    The two fixed points are not Pareto ordered, so the profile is not unambiguous and, in consequence, not a reaction-function equilibrium.}
    \label{TAB:BoS}
\end{table}

We take inspiration from Nash equilibrium as defined in Subsection~\ref{SUB:SG} to formulate a generalized solution concept for reaction-function games.
In such an equilibrium, no unilateral reaction-function change is beneficial. 
The restriction to unambiguous profiles aligns with the Nash equilibrium requirement that players’ beliefs about the consequences of their actions are correct:
all players associate to an equilibrium profile $R$ the same outcome $a \in \mathcal{E}(R)$.

\begin{definition}
    A \textbf{reaction-function equilibrium} is an unambiguous profile $R \in \mathcal{R}$ such that, for each player $i$ and reaction function $R'_i \in \mathcal{R}_i$, $U_i(R) \geq U_i(R'_i, R_{-i})$.
\end{definition}

If players are restricted to constant reaction functions, then fixed points are unique for each profile (Remark~\ref{REM:uniqueness}) and reaction-function equilibrium is logically equivalent to Nash equilibrium. 
We record this in Observation~\ref{OBS:constant}.
Applied to the \emph{Battle of the sexes}, there are two such constant reaction-function equilibria:
both players always play $x$, or both always play~$y$.%
\footnote{For\label{FN:BoS} the \emph{Battle of the sexes}, we find the striking conclusion that any outcome can be supported in reaction-function equilibrium, including miscoordination; 
see Theorem~\ref{TH:two-maxmin}.
Whereas the profile of best replies, $\textit{BR}$, is not unambiguous and thus not a reaction-function equilibrium, the \emph{worst} replies actually support miscoordination, that is $(x,y)$ or $(y,x)$.
This is an example of a ``bad equilibrium'' in which each player rationalizes playing badly by that the other does so as well.
This issue is inherited directly from Nash equilibrium (e.g. the Pareto-inferior Nash equilibrium in the \emph{Divide the dollar}-game, see \citealp{BramsTaylor1994}) but gets amplified as reaction-function games are more complex.
}

\begin{observation} \label{OBS:constant}
    A profile of constant reaction functions is a reaction-function equilibrium if and only if its unique fixed point is a Nash equilibrium in the strategic game.
\end{observation}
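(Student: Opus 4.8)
The plan is to prove both directions of the biconditional by unpacking the definitions, using Remark~\ref{REM:uniqueness} to handle the fixed-point structure and Observation~\ref{OBS:constant}'s own setup (constant reactions) to relate deviations in the two games. First I would fix a profile $R$ of constant reaction functions, say $R_i(a_{-i}) = a_i$ for all $a_{-i}$, and observe that the induced directed graph has no edges (each $R_i$ is constant, hence depends on no coordinate), so it is acyclic; by Remark~\ref{REM:uniqueness} the profile has a unique fixed point, which is exactly $a = (a_i)_{i \in N}$, since $R(a) = a$ by construction and uniqueness rules out anything else. In particular $\mathcal{E}(R) = \{a\}$ is nonempty, so $U_i(R) = u_i(a)$ for every player $i$, and $R$ is trivially unambiguous (a singleton set of fixed points is Pareto ordered).

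Next I would address the ``if'' direction: suppose $a$ is a Nash equilibrium of the strategic game. I need to show $R$ is a reaction-function equilibrium, i.e., no player $i$ has a beneficial deviation $R'_i \in \mathcal{R}_i$. The key point is that, with $R_{-i}$ constant at $a_{-i}$, any reaction function $R'_i$ only ever gets the argument $a_{-i}$, so the deviating profile $(R'_i, R_{-i})$ has graph still acyclic (the only possible new edges point out of $i$, never into $i$, since the others remain constant), hence a unique fixed point, namely $(R'_i(a_{-i}), a_{-i})$. Writing $a'_i = R'_i(a_{-i}) \in A_i$, we get $U_i(R'_i, R_{-i}) = u_i(a'_i, a_{-i}) \leq u_i(a) = U_i(R)$ by the Nash property. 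So no deviation helps, and $R$ is a reaction-function equilibrium. For the ``only if'' direction: suppose $R$ is a reaction-function equilibrium. For any $a'_i \in A_i$, consider the constant deviation $R'_i \equiv a'_i$; as above $(R'_i, R_{-i})$ has unique fixed point $(a'_i, a_{-i})$, so $U_i(R'_i, R_{-i}) = u_i(a'_i, a_{-i})$, and equilibrium gives $u_i(a) = U_i(R) \geq U_i(R'_i, R_{-i}) = u_i(a'_i, a_{-i})$. Since $i$ and $a'_i$ were arbitrary, $a$ is a Nash equilibrium.

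The main obstacle — really the only subtle point — is being careful that a unilateral deviation to a \emph{non-constant} reaction function could in principle create cyclic dependence and hence multiple or zero fixed points, which would force us to grapple with the $\max$ over $\mathcal{E}$ or the $-\infty$ case. The resolution is the observation that when all \emph{other} players play constant reactions, player $i$'s reaction is only ever evaluated at the single point $a_{-i}$, so effectively $i$ is also choosing a single action; the dependency graph gains no incoming edges at $i$ and the other nodes have no incoming edges at all, so it stays acyclic and Remark~\ref{REM:uniqueness} applies throughout. Once this is in hand, both directions are immediate from matching up $U_i$ with $u_i$ at the unique fixed point. I would also note explicitly that unambiguity of $R$ is automatic, so it imposes no extra constraint in either direction.
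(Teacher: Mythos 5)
Your proof is correct and follows essentially the same route the paper intends: constant reactions give an acyclic dependency graph and hence a unique fixed point (Remark~\ref{REM:uniqueness}), unambiguity is automatic, and deviations are handled by noting that when the others are constant the deviator's reaction is only evaluated at $a_{-i}$, so $U_i$ reduces to $u_i$ and the equivalence with Nash equilibrium is immediate. One trivial slip: in your final paragraph the relevant fact is that the \emph{other} nodes have no \emph{outgoing} edges (they may receive edges from $i$), but you state the correct version earlier and the acyclicity argument is unaffected.
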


To conclude whether a profile is a reaction-function equilibrium or not, it suffices to consider constant deviations.
This is recorded in Observation~\ref{OBS:constdeviation}.
Indeed, if player $i$ targets the preferred fixed point $a' \in \mathcal{E}(R'_i, R_{-i})$ by deviating through~$R'_i$, then $i$ also beneficially deviates through the constant reaction $R''_i$ comprised of always playing~$a'_i$ as then $a' \in \mathcal{E}(R''_i,R_{-i})$.%
\footnote{This parallels mixed-strategy equilibrium, where it analogously suffices to consider pure-strategy deviations.} 

\begin{observation} \label{OBS:constdeviation}
    A profile $R\in {\cal R}$ is a reaction-function equilibrium if and only if there does not exist a player $i$ and constant reaction function $R'_i \in \mathcal{R}_i$ such that $U_i(R'_i, R_{-i}) > U_i(R)$.
\end{observation}

Best-reply reactions are a natural candidate for equilibrium play.
However, in reaction-function games, the equilibrium condition is evaluated in terms of the fixed points induced by the profile, and a unilateral deviation can completely change this set.
Hence, even if, for instance, the profile of best replies has a unique fixed point, it need not be stable:
a player may prefer to deviate (through a constant reaction function)---possibly even corresponding to an action \emph{dominated} in the strategic game.
Example~\ref{EX:BR} illustrates this mechanism.

\begin{example}[Deviating from best-reply reactions] \label{EX:BR}
    Consider the two-player game with payoffs in Table~\ref{TAB:BR}.
    Best replies are highlighted:
    we have $\textit{BR}_1(x) = a$ and $\textit{BR}_1(y) = b$ whereas action $x$ dominates $y$ for the second player, $\textit{BR}_2(a) = \textit{BR}_2(b) = x$.
    
    \begin{table}[!htb]
        \renewcommand{\arraystretch}{1.2}
        \centering
        $\begin{array}{r | cc}
              &  x  &  y  \\ \hline
            a & \underline{1},\underline{1} & 0,0 \\
            b & 0,\underline{3} & \underline{1},2 
        \end{array}$
        \caption{Payoff matrix for Example~\ref{EX:BR}.
        Best replies are highlighted.}
        \label{TAB:BR}
    \end{table}

    There is a unique fixed point, $\mathcal{E}(\textit{BR}) = \{(a,x)\}$.
    However, $\textit{BR}$ is not a reaction-function equilibrium as player $2$ prefers to deviate through the constant reaction $R'_2$ with $R'_2(a) = R'_2(b) = y$:
    we have $\mathcal{E}(\textit{BR}_1, R'_2) = \{(b,y)\}$, so $U_2(\textit{BR}_1,R'_2) = 2 > 1 = U_2(\textit{BR})$.
    Indeed, $(\textit{BR}_1,R'_2)$ is a reaction-function equilibrium.
    \hfill $\circ$
\end{example}

There are often many reaction-function equilibria supporting the same outcome.
This mirrors how a wide range of off-equilibrium threats and promises can support the same equilibrium path in infinitely repeated games.
It will at times be more interesting to explore which outcomes can be supported rather than the particular reaction functions used to support them.
Formally, outcome $a$ is \textbf{supported} in reaction-function equilibrium if $a$ is the outcome unambiguously associated to an equilibrium profile;
that is, there is a reaction-function equilibrium $R \in \mathcal{R}$ such that $a \in \mathcal{E}(R)$ and $u_i(a) \geq u_i(a')$ for each player $i$ and outcome $a' \in \mathcal{E}(R)$.

\section{What outcomes can be supported?} \label{SEC:supportedOutcomes}

In this section, we first establish existence of reaction-function equilibria (Theorem~\ref{TH:existence}).
We then turn to two-player games, which make out a cornerstone of the game-theoretic literature.
We find a simple characterization of all two-player outcomes that can be supported in reaction-function equilibrium (Theorem~\ref{TH:two-maxmin}).
The proof is constructive and illustrates how reaction functions can emulate strategies familiar from infinitely repeated games to support outcomes not attainable in one-shot strategic games.
With yet more players, equilibrium multiplicity becomes an issue as essentially all outcomes can be supported (Theorem~\ref{TH:moreplayers}).

\subsection{Equilibrium existence}

Our next result establishes that reaction-function equilibria always exist.
This is in contrast to Nash equilibrium in the corresponding strategic game (recall that players cannot mix).
By Observation~\ref{OBS:constant}, this implies that we will have to go beyond constant reaction functions to prove reaction-function equilibrium existence.
We formalize the argument below, which is instructive and of independent interest as it links to yet another class of games, namely sequential-move games.
These neatly bridge simultaneous-move strategic games and reaction-function games.

Suppose players take action in sequence $1, \dots, n$, each observing all prior choices $a_{<i}\in A_{<i}$ before making theirs (in the conventional sense, i.e., not using reaction functions).
This defines a finite extensive-form game of perfect information.
Such games do have Nash equilibria \citep[use backward induction]{Kuhn1953}.
Such an equilibrium is a strategy profile $s = (s_1, \dots, s_n)$ where each $s_i \colon A_{<i} \to A_i$ maps histories of prior moves $(a_1, \dots, a_{i-1}) \in A_{<i}$ to an action $s_i(a_1, \dots, a_{i-1}) \in A_i$ by player $i$.
This translates into the reaction function $R_i \in \mathcal{R}_i$ with $R_i(a_{-i}) = s_i(a_{<i})$ for all $a_{-i} \in A_{-i}$.
As sketched in Remark~\ref{REM:uniqueness}, there is a unique fixed point at~$R$, namely the outcome realized along the equilibrium path of the extensive-form game, and the same holds at $(R'_i,R_{-i})$ for all constant reactions $R'_i \in \mathcal{R}_i$. 
If $i$ were able to deviate in the reaction-function game by always taking action~$a'_i$, then $i$ would also deviate in the extensive-form game through strategy $s'_i$ with $s'_i(a_{<i}) = a'_i$ for each~$a_{<i} \in A_{<i}$.
Therefore, outcomes supported in Nash equilibrium in the extensive-form game---which exist---are also supported in reaction-function equilibrium.\footnote{%
One may wonder whether the converse is true, that is, if outcome $a$ is supported in reaction-function equilibrium, can we order the players such that $a$ is realized on the equilibrium path of the corresponding extensive-form game?
This is not true.
The ``bad equilibria'' of the \textit{Battle of the sexes} (see footnote~\ref{FN:BoS}) cannot be supported in this way as they require a ``cyclical'' reaction-function profile (compare Remark~\ref{REM:uniqueness}).}

\begin{theorem} \label{TH:existence}
    In every reaction-function game, there is a reaction-function equilibrium.
\end{theorem}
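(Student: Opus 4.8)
The plan is to reduce equilibrium existence in the reaction-function game to the existence of a Nash equilibrium in a cleverly chosen finite sequential-move game of perfect information, exactly as foreshadowed in the paragraph preceding the theorem statement. Concretely, I would fix an arbitrary ordering of the players as $1, \dots, n$ and consider the extensive-form game in which player $i$ moves after observing the actions $a_1, \dots, a_{i-1}$ of the earlier players. Since all action sets $A_i$ are finite, this is a finite game of perfect information, so by backward induction (Kuhn's theorem) it possesses a pure-strategy Nash equilibrium $s = (s_1, \dots, s_n)$ with $s_i \colon A_{<i} \to A_i$.

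Next I would translate $s$ into a reaction-function profile $R \in \mathcal{R}$ by setting $R_i(a_{-i}) = s_i(a_{<i})$, i.e., player $i$'s reaction ignores the coordinates $a_j$ for $j > i$ and is determined by the earlier coordinates through $s_i$. The associated dependency graph from Remark~\ref{REM:uniqueness} is then a subgraph of the transitive tournament $\{ij : i > j\}$, hence acyclic, so $\mathcal{E}(R)$ is a singleton; I would identify this unique fixed point as precisely the terminal outcome $a^*$ reached along the equilibrium path of the extensive-form game. In particular $R$ is trivially unambiguous. The same acyclicity argument applies to $(R'_i, R_{-i})$ whenever $R'_i$ is a constant reaction: replacing $i$'s reaction by a constant only deletes edges out of $i$, so the graph stays acyclic and there is again a unique fixed point, which coincides with the outcome of the extensive-form play when player $i$ deviates to the constant strategy $s'_i \equiv a'_i$.

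The verification of the equilibrium condition then goes as follows. By the observation recorded just before Example~\ref{EX:BR} (that to refute a reaction-function equilibrium it suffices to consider constant deviations), it is enough to show no player $i$ can strictly gain by switching to a constant reaction $R'_i$ playing some fixed $a'_i \in A_i$. But $U_i(R'_i, R_{-i}) = u_i(a')$ where $a'$ is the path outcome of the extensive-form game when $i$ deviates to $s'_i \equiv a'_i$ against $s_{-i}$, and $U_i(R) = u_i(a^*)$ with $a^*$ the equilibrium path outcome; since $s$ is a Nash equilibrium of the extensive-form game, $u_i(a^*) \geq u_i(a')$, so the deviation is not beneficial. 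Hence $R$ is a reaction-function equilibrium.

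I expect the only real subtlety — rather than an obstacle — to be the bookkeeping that makes the correspondence between extensive-form strategies and reaction functions precise: one must check that the unique fixed point of $R$ genuinely is the backward-induction path outcome (a short induction on the player index, using that $R_i$ depends only on lower-indexed coordinates), and that a constant deviation in $\mathcal{R}_i$ corresponds to a well-defined deviation in the extensive-form game with the same realized outcome. Everything else is immediate from Kuhn's theorem, Remark~\ref{REM:uniqueness}, and the ``constant deviations suffice'' principle already established in the text. It may also be worth remarking explicitly that the construction does not require the strategic game to have a Nash equilibrium, which is the whole point of contrasting with Observation~\ref{OBS:constant}.
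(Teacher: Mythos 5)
Your proposal is correct and follows essentially the same route as the paper: the paper's own argument (given in the text preceding the theorem) likewise orders the players, takes a backward-induction equilibrium of the resulting perfect-information extensive-form game via Kuhn's theorem, converts the strategies into reactions $R_i(a_{-i}) = s_i(a_{<i})$ with a unique (hence unambiguously Pareto-ranked) fixed point by the acyclicity observation of Remark~\ref{REM:uniqueness}, and rules out beneficial deviations by reducing to constant reactions and invoking the Nash property of the extensive-form equilibrium. Your added bookkeeping remarks only make explicit what the paper leaves implicit.
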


If we apply this construction to \textit{Matching pennies} (Table~\ref{TAB:MP}), we find profiles that comprise one constant and one best-reply reaction (e.g., $R_1(H) = R_1(T) = H$ and $R_2 = \textit{BR}_2$ with $R_2(H) = T$ and $R_2(T) = H$).
These are indeed the only reaction-function equilibria.

\begin{table}[!htb]
    \renewcommand{\arraystretch}{1.2}
    \centering
    $\begin{array}{r | cc}
          &  H  &  T  \\ \hline
        H & \phantom{-}\underline{1},-1 & \underline{-1},\phantom{-}\underline{1} \\
        T & -1,\phantom{-}\underline{1} & \phantom{-}1,-1
    \end{array}$
    \caption{Payoff matrix for \emph{Matching pennies}.
    A reaction-function equilibrium supporting $(H,T)$ is highlighted.}
    \label{TAB:MP}
\end{table}

\subsection{Maxmin payoff bound for two-player games}

Our next result mimics the ``folk theorems'' obtained in \citet{Tennenholtz2004} and \citet{Kalaietal2010} for two-player games. 
Theorem~\ref{TH:two-maxmin} asserts that an outcome can be supported precisely when it awards every player at least their \emph{maxmin payoff} \citep[or ``security level'',][]{Maschler_Solan_Zamir_2013}. Define player $i$'s \textbf{maxmin payoff} $\ubar{v}_i \in \mathbb{R}$ as the highest payoff $i$ can ensure irrespective of the actions of the others, i.e.,
\[
    \ubar{v}_i \equiv \max_{a_i \in A_i} \min_{a_{-i} \in A_{-i}} u_i(a_i, a_{-i}).
\]
We will let $\ubar{a}_i \in A_i$ denote a corresponding ``safe'' action, which generically will be unique.

Theorem~\ref{TH:two-maxmin} parallels folk theorems of infinitely repeated games but is now obtained in a one-shot setting.
The ``trigger strategy'' of repeated games here maps into a ``promise and threat'' reaction function.
To support outcome $a^* \in A$ (with at least maxmin payoffs), play $R_i(a^*_{-i}) = a^*_i$ and, for $a_{-i} \neq a^*_{-i}$, select $a_i = R_i(a_{-i})$ to minimize the other player's payoff $u_{-i}(a)$.
Compared to the existence proof of Theorem~\ref{TH:existence}, which involved at least one constant reaction, we now go yet further and both players choose more complex reactions.
The proof is in the Appendix.

\begin{theorem} \label{TH:two-maxmin}
    For two-player games, outcome $a \in A$ is supported in reaction-function equilibrium if and only if $u_i(a) \geq \ubar{v}_i$ for each player~$i$. 
\end{theorem}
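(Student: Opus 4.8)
**The plan is to prove both directions separately, with the "if" direction being constructive and the "only if" direction being a short deviation argument.**

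For the \emph{only if} direction, suppose $R$ is a reaction-function equilibrium with preferred fixed point $a \in \mathcal{E}(R)$. Fix a player $i$; I want to show $u_i(a) \geq \ubar{v}_i$. The idea is that player $i$ can always guarantee the maxmin payoff by deviating to the constant reaction that always plays $\ubar{a}_i$. Concretely, let $R'_i$ be the constant reaction with $R'_i(a_{-i}) = \ubar{a}_i$ for all $a_{-i}$. In the profile $(R'_i, R_{-i})$, player $i$'s action is pinned to $\ubar{a}_i$; the other player $j$'s reaction $R_j$ is a function of $a_i$ alone (two-player game), so $R_j(\ubar{a}_i)$ is determined, and the pair $(\ubar{a}_i, R_j(\ubar{a}_i))$ is a fixed point. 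Hence $\mathcal{E}(R'_i, R_{-i}) \neq \emptyset$ and every fixed point in it gives $i$ at least $\min_{a_{-i}} u_i(\ubar{a}_i, a_{-i}) = \ubar{v}_i$, so $U_i(R'_i, R_{-i}) \geq \ubar{v}_i$. By the equilibrium condition, $u_i(a) = U_i(R) \geq U_i(R'_i, R_{-i}) \geq \ubar{v}_i$. (The only subtlety: $\mathcal{E}(R)$ may contain several outcomes, but since $R$ is unambiguous and $a$ is the Pareto-superior one, $U_i(R) = u_i(a)$ exactly.)

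For the \emph{if} direction, suppose $a^* \in A$ satisfies $u_i(a^*) \geq \ubar{v}_i$ for $i = 1, 2$. I construct the "promise and threat" profile: for each player $i$, set $R_i(a^*_{-i}) = a^*_i$, and for every $a_{-i} \neq a^*_{-i}$ (in a two-player game this is just the single other action $a_j \neq a^*_j$), set $R_i(a_{-i}) \in \arg\min_{a_i} u_j(a_i, a_{-i})$ — i.e., $i$ punishes $j$ as hard as possible. First check $a^*$ is a fixed point: immediate by construction. Next, I must show $a^*$ is the Pareto-superior fixed point, i.e., $R$ is unambiguous and supports $a^*$. Any other fixed point $a \neq a^*$ must have $a_j \neq a^*_j$ for at least one $j$; tracing through the reaction functions, if say $a_2 \neq a^*_2$ then $a_1 = R_1(a_2)$ is chosen to minimize $u_2$, so $u_2(a) \leq \min_{a_1} u_2(a_1, a_2) \leq \ubar{v}_2 \leq u_2(a^*)$ — and one checks the same holds for player 1, so every fixed point is Pareto-dominated by $a^*$. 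Finally, verify no beneficial deviation: if player $i$ deviates to $R'_i$, any resulting fixed point $a$ with $a_i \neq a^*_i$ still has $a_{-i} = R_{-i}(a_i) \neq a^*_{-i}$ (since $R_{-i}$ maps $a_i \neq a^*_i$ to a punishment action, which one should arrange to differ from $a^*_{-i}$ — or handle the boundary case where it coincides), and by the same minimization $u_i(a) \leq \ubar{v}_i \leq u_i(a^*)$; if $a_i = a^*_i$ then $a_{-i} = R_{-i}(a^*_i) = a^*_{-i}$ so $a = a^*$. Either way $U_i(R'_i, R_{-i}) \leq u_i(a^*) = U_i(R)$.

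\textbf{The main obstacle} is the bookkeeping around degenerate cases in the \emph{if} direction: when the punishment action against $j$ happens to equal $a^*_j$, or when $A_j$ has only two elements so the "off-path" domain is a single point, the reaction function $R_{-i}$ may fail to be injective on the relevant actions and a deviation by $i$ could land on an unintended fixed point. One needs to argue that even then the payoff bound $u_i(a) \leq \ubar{v}_i$ holds at any such fixed point, or adjust the punishment choice (e.g., break ties to avoid $a^*_j$ when possible) so that the only fixed point with $a_i = a^*_i$ is $a^*$ itself. I would also double-check that the constructed $R$ is genuinely unambiguous rather than merely having $a^*$ as \emph{a} Pareto-maximal fixed point — the definition requires $a^*$ to weakly Pareto-dominate \emph{all} fixed points, which the argument above delivers, but it is worth stating explicitly since the \emph{Battle of the sexes} example (footnote~\ref{FN:BoS}) shows the naive best-reply analogue fails precisely here.
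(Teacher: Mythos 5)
Your proof is correct and follows essentially the same route as the paper: the ``only if'' direction via a constant deviation to the maxmin action (whose unique fixed point in a two-player game yields at least $\ubar{v}_i$), and the ``if'' direction via the same promise-and-threat construction with the bound $u_i(a') \leq \min_{a_{-i}} u_i(a'_i, a_{-i}) \leq \ubar{v}_i$ at any fixed point other than $a^*$. The ``obstacle'' you flag is in fact a non-issue: the payoff bound comes from the minimization given $a'_i \neq a^*_i$ and does not require the punishment action to differ from $a^*_j$ (and any fixed point of $R$ or of $(R'_i,R_{-i})$ with $a_i = a^*_i$ is automatically $a^*$ itself), so no tie-breaking adjustment is needed.
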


To illustrate Theorem~\ref{TH:two-maxmin}, consider the \emph{Prisoner's dilemma} (Table~\ref{TAB:PD}).
In the strategic game, the Pareto-efficient cooperative outcome $(C,C)$ cannot be supported as ``defecting'' ($D$) is dominant for each player.
However, the cooperative outcome can be supported in the reaction-function game, and in precisely one way:
each player cooperates if and only if the other player does (highlighted in Table~\ref{TAB:PD}).
Such reactions lead to two Pareto-ordered fixed points, the inferior Nash equilibrium $(D,D)$ and the superior outcome $(C,C)$. 
Hence, compared to the unique fixed point in the construction offered in Theorem~\ref{TH:existence}, we may now obtain reaction-function equilibria with multiple (but still Pareto-ordered) fixed points.

\begin{table}[!htb]
    \renewcommand{\arraystretch}{1.2}
    \centering
    $\begin{array}{r | cc}
          &  C  &  D  \\ \hline
        C & \underline{2},\underline{2} & 0,3 \\
        D & 3,0 & \underline{1},\underline{1}
    \end{array}$
    \caption{Payoff matrix for the \emph{Prisoner's dilemma}.
    The unique reaction-function profile that supports $(C,C)$ is highlighted.}
    \label{TAB:PD}
\end{table}

\subsection{More players, more possibilities for support}

Once there are more players, we can carefully construct reactions to support an outcome by ruling out any other outcome from ever becoming a fixed point even following unilateral deviations.
That is, for each outcome $a$, we can find a profile $R \in \mathcal{R}$ such that, for each deviation $R'_i \in \mathcal{R}_i$, we have $\mathcal{E}(R'_i,R_{-i}) \subseteq \mathcal{E}(R) = \{ a \}$ and thus $U_i(R'_i,R_{-i}) \leq U_i(R)$.%
\footnote{Importantly, this uses the assumption $\mathcal{E}(R'_i,R_{-i}) = \emptyset \implies U_i(R'_i,R_{-i}) = - \infty$.}
This illustrates how an outcome can be supported %
simply because there is no other alternative.
The proof is a combinatorial exercise found in the Appendix.
For the case with three players and two actions, we can find examples in which some outcomes cannot be supported in reaction-function equilibrium (see again the Appendix).

\begin{theorem} \label{TH:moreplayers}
    All outcomes can be supported in reaction-function equilibrium if
    \begin{itemize}[noitemsep]
        \item There are at least four players, or
        \item There are at least three players with at least three actions each.
    \end{itemize}
\end{theorem}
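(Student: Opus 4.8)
The plan is to establish both cases by explicitly constructing, for an arbitrary target outcome $a \in A$, a profile $R \in \mathcal{R}$ whose \emph{only} fixed point is $a$ and which remains fixed-point-free (or at least $\{a\}$-bounded) under every unilateral deviation; by Remark~\ref{REM:uniqueness} the relevant graphs will be cyclical, which is exactly what lets us destroy alternative fixed points. Given such an $R$, unambiguity is immediate ($\mathcal{E}(R) = \{a\}$ is trivially Pareto-ordered), and for any player $i$ and any $R'_i \in \mathcal{R}_i$ we would have $\mathcal{E}(R'_i, R_{-i}) \subseteq \{a\}$, so $U_i(R'_i,R_{-i}) \le u_i(a) = U_i(R)$, making $R$ a reaction-function equilibrium supporting $a$. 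So the entire content is the combinatorial construction of $R$.

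First I would treat the four-player case. Fix the target $a$. For each player $i$, I want $R_i$ to return $a_i$ when the others play $a_{-i}$, and otherwise to react so as to ``chase'' the others out of any consistent configuration. The idea is to use a directed cycle through the players (say $1 \to 2 \to 3 \to 4 \to 1$) as a ``watchdog'' structure: player $i$'s reaction looks primarily at its predecessor's action and plays $a_i$ iff the predecessor plays $a_{\text{pred}}$, but is also rigged so that if the predecessor deviates, $i$ deviates to a value that forces $i$'s successor to deviate, propagating around the cycle. The subtlety is to make this consistent: a naive ``deviate iff predecessor deviates'' assignment has $a$ as a fixed point \emph{but also} the all-deviate configuration may be a fixed point. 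With four players one has enough room to break the symmetry — e.g., use different ``target deviations'' for different players, or have one player react to \emph{two} others, so that no non-$a$ profile can simultaneously satisfy everyone. I would nail down one clean such rule (this is where a short case check over which coordinates differ from $a$ is needed) and verify $\mathcal{E}(R) = \{a\}$; then check that fixing any single $R'_i$ still leaves the remaining three-player cyclic gadget with no fixed point off $a$, because three reactions cannot be simultaneously satisfied by a deviating profile while two of them are still the watchdog ones.

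Next, the three-player/three-action case. Here the extra action is essential: with only two actions the watchdog trick fails (the appendix example confirms some outcomes are unsupportable), but with $|A_i| \ge 3$ each player has, besides $a_i$, at least two alternative actions, giving the reactions enough range to encode a ``rock-paper-scissors''-type cycle on the off-$a$ region that admits no fixed point. Concretely, for each $i$ pick two distinct alternatives $b_i, c_i \in A_i \setminus \{a_i\}$ and let $R_i$ play $a_i$ exactly on $a_{-i}$, and on every other input cycle among $b_i, c_i$ according to what a designated other player is doing, arranged so that consistency is impossible. The main obstacle in both cases is the same: ensuring the construction kills \emph{all} off-target fixed points — not just $a$ being one — and that this robustness survives an arbitrary unilateral deviation $R'_i$, since $R'_i$ could in principle be tailored to manufacture a fixed point; the resolution is that after fixing $R'_i$, the other $n-1$ watchdog reactions still impose mutually incompatible constraints on any candidate fixed point with more than one off-$a$ coordinate, while a candidate differing from $a$ in exactly the coordinate $i$ is blocked by $R_j$ for $j \ne i$ reading $i$'s action and refusing to stay at $a_j$. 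I would organize the verification as: (i) $a \in \mathcal{E}(R)$; (ii) no $a' \ne a$ is in $\mathcal{E}(R)$, by cases on $|\{j : a'_j \ne a_j\}|$; (iii) the same case analysis goes through with one reaction replaced, using that at least two watchdog reactions always remain. The boundary claim (three players, two actions each admits unsupportable outcomes) is noted as handled separately in the Appendix and is not part of this statement.
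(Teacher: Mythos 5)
Your high-level strategy is exactly the paper's: fix the target $a$, build a payoff-independent profile $R$ with $\mathcal{E}(R)=\{a\}$ such that every unilateral deviation $R'_i$ satisfies $\mathcal{E}(R'_i,R_{-i})\subseteq\{a\}$, note that unambiguity and the equilibrium property then follow immediately, and use a cyclic ``neighbour-watching'' gadget for $n\ge 4$ and an extra-action gadget for three players with three actions. Your reduction is also stated correctly: what must be shown is that at every outcome $a'\neq a$ at least \emph{two} players' reactions are violated under $R$ (equivalently, for every $i$ some $j\neq i$ is ``unhappy''), since a deviation by $i$ can only repair $i$'s own coordinate.

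The genuine gap is that you never exhibit a rule achieving this, and you yourself flag that ``the entire content is the combinatorial construction of $R$.'' Phrases like ``nail down one clean such rule,'' ``rigged so that consistency is impossible,'' and the rock--paper--scissors sketch defer precisely the step whose feasibility is in question --- and it is genuinely delicate, as witnessed by the fact that with three players and two actions the construction is impossible (the paper's appendix example). The paper's proof supplies the missing constructions explicitly: for $n\ge 4$ (supporting $(0,\dots,0)$ after relabeling), $R_i(a_{-i})=1$ if $a_{i-1}=0$ and $a_{i+1}\neq 0$ (indices mod $n$) and $R_i(a_{-i})=0$ otherwise, followed by a case analysis on the runs of $1$'s showing at least two unhappy players at every $a\neq(0,\dots,0)$ --- and this is where $n\ge 4$ is used concretely (a configuration with two adjacent $1$'s needs a fourth player so that $a_{i-2}\neq a_{i+1}$), a point your ``enough room to break the symmetry'' remark glosses over. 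For $n=3$ with actions $\{0,1,2\}\subseteq A_i$, the paper sets $R_i(0,0)=0$ and otherwise $R_i(a_{-i})=(i-\sum_{j\neq i}a_j)\bmod 3$, so that at each $a\neq(0,0,0)$ exactly one player (namely $i=(\sum_j a_j)\bmod 3$) is happy; note this rule does \emph{not} insist that $R_i$ avoid $a_i$ off the target input, unlike your sketch, which imposes an extra constraint you have not verified can be met. Until some such rule is written down and the two-unhappy-players lemma proved for it, the argument is a plan rather than a proof.
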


The construction employed in the proof of Theorem~\ref{TH:moreplayers} is highly artificial and provides little in terms of practical insight---for instance, player reactions are set completely independent from their payoffs.
Up to this point, we have imposed little structure on reactions.
This helps to establish existence and broad support results, but also allows play that is hard to meaningfully interpret.
In what follows, we introduce additional constraints on reaction functions (starting with \emph{safe play}) to move from theoretical possibility to more compelling predictions and to reduce support for ``bad'' outcomes.

\section{``Playing it safe'' with reaction functions} \label{SEC:safe}

When constructing reaction functions, some reactions appear less safe than others.
For instance, Theorem~\ref{TH:two-maxmin} builds on ``promise and threat'' reactions, where an outcome with at least maxmin payoffs is supported on the premise of maximal punishment for any deviation.
Even though these ``threats'' are not carried out in equilibrium, it seems risky for player $i$ to base their reaction solely on player $j$'s payoffs.
Next, we address this by introducing the idea of \emph{safe play}.

``Safe play'' in strategic games entails taking a safe action~$\ubar{a}_i$ to ensure at least maxmin payoff~$\ubar{v}_i$ regardless the actions of the others \citep[e.g.][]{Maschler_Solan_Zamir_2013}.
This individual risk-minimizing behavior can lead to collectively bad outcomes, for instance, in \emph{Prisoner's dilemma} (Table~\ref{TAB:PD}).
With reaction functions, there are more options.
We say that reaction $R_i$ is \emph{safe} if, for all actions $a_{-i}$ of the others, the outcome $(R_i(a_{-i}), a_{-i}) $ awards $i$ at least her maxmin payoff~$\ubar{v}_i$.\footnote{%
A stronger requirement would be $U_i(R_i,R_{-i}) \geq \ubar{v}_i$ for each $R_{-i} \in \mathcal{R}_{-i}$, but this is too demanding.
With more than two players, there is $R_{-i} \in \mathcal{R}_{-i}$ such that $\mathcal{E}(R) = \emptyset$ and thus $U_i(R) = -\infty$ for all $R_i \in \mathcal{R}_i$.}
A reaction-function equilibrium $R \in {\cal R}$ is \emph{safe} if each player's reaction $R_i \in {\cal R}_i$ is safe. 

\begin{definition}
    Reaction function $R_i \in \mathcal{R}_i$ is \textbf{safe} if $u_i(R_i(a_{-i}), a_{-i}) \geq \ubar{v}_i$ for each $a_{-i} \in A_{-i}$.
\end{definition}

Safe reaction-function play is always possible, for instance by playing best replies $\textit{BR}_i$.
Yet simpler is the constant reaction function to always play a safe action $\ubar{a}_i$.
If the resulting action profile $\ubar{a}$ of such play is a Nash equilibrium in the strategic game, then the corresponding constant reactions form a reaction-function equilibrium (Observation~\ref{OBS:constant}).
In cases like a ``race to the bottom'', where players, say, will not exert effort unless others do, the low-effort outcome may be a Nash equilibrium and thus supported in safe reaction-function equilibrium.
But importantly, if Pareto improvements are possible, they too can be supported in safe reaction-function equilibrium.
Hence, whenever a safe reaction-function equilibrium exists, a Pareto-efficient one does as well.

\begin{theorem} \label{TH:safeRFE}
    If a safe outcome $\ubar{a} \in A$ of the strategic game is a Nash equilibrium, %
    then there is a safe reaction-function equilibrium.
    In general, if a safe reaction-function equilibrium exists, then there exists a Pareto-efficient safe reaction-function equilibrium.
\end{theorem}

To illustrate Theorem~\ref{TH:safeRFE} and how reaction functions make a difference compared to strategic games, consider the \emph{Stag hunt}  (tracing back to Rousseau's \textit{A Discourse on Inequality}) in Table~\ref{TAB:safeexist}.
In the strategic game, both taking safe action $y$ leads to the risk-dominant Nash equilibrium \citep{HarsanyiSelten1988}.
In contrast, reacting by matching the action of the other (highlighted) is safe and supports the superior payoff-dominant outcome $(x,x)$.
Thus, safe reaction-function equilibrium ensures a high-enough lower bound on player payoffs without having to compromise in achieving yet better outcomes.

\begin{table}[!htb]
    \renewcommand{\arraystretch}{1.2}
    \centering
    $\begin{array}{r | cc}
          &  x  &  y  \\ \hline
        x & \underline{3},\underline{3} & 0,2 \\
        y & 2,0 & \underline{2},\underline{2}
    \end{array}$
    \caption{Payoff matrix for \emph{Stag hunt}.
    The ``match other'' reactions are highlighted, supporting $(x,x)$ in safe reaction-function equilibrium.}
    \label{TAB:safeexist}
\end{table}

However, existence is not guaranteed.
Example~\ref{EX:BR} already showed that all playing best replies, i.e., playing safe, need not be an equilibrium.
Example~\ref{EX:nosafeRFE} goes further to show that there need not exist a safe equilibrium.

\begin{example}[No safe reaction-function equilibrium] \label{EX:nosafeRFE}
    Consider the two-player game with payoffs in Table~\ref{TAB:nosafeRFE}.
    For contradiction, suppose $R$ is a safe reaction-function equilibrium and label the supported outcome~$a$.
    Label the players $i$ and $j$ such that $u_i(a) < 3$.
    
    \begin{table}[!htb]
        \renewcommand{\arraystretch}{1.2}
        \centering
        $\begin{array}{r | ccc}
              &  x  &  y  &  z \\ \hline
            x & 1,1 & 1,0 & 1,3 \\
            y & 0,1 & 2,2 & 0,1 \\
            z & 3,1 & 1,0 & 0,0
        \end{array}$
        \caption{Payoff matrix for Example~\ref{EX:nosafeRFE}.}
        \label{TAB:nosafeRFE}
    \end{table}
    Each player's maxmin payoff is~$1$, so safe play dictates that $R_j(z) = x$.
    But then player $i$ deviates by always playing~$z$ as the unique fixed point yields payoff $u_i(z,x) = 3$.
    This contradicts that $R$ is a reaction-function equilibrium.
    \hfill $\circ$
\end{example}

Example~\ref{EX:nosafeRFE} gives further insights on our main results in Section~\ref{SEC:supportedOutcomes}.
In consequence, the construction for our existence result (Theorem~\ref{TH:existence}) involves unsafe play and our two-player characterization (Theorem~\ref{TH:two-maxmin}) does not extend to safe reactions.
It is by definition that ``bad'' outcomes awarding less than maxmin payoffs cannot be supported in safe play;
hence, safe play addresses at least the most problematic aspect of equilibrium multiplicity (Theorem~\ref{TH:moreplayers}).
Most reaction functions that we explore next, in Section~\ref{SEC:investmentGames}, will be safe.

\section{Symmetric investment games} \label{SEC:investmentGames}

We now apply our theoretical framework on some games for which implementing reaction functions through blockchain-based smart contracts seems very promising.
These are symmetric $n$-player games with finite action sets $A_i = \{ 0, 1, \dots, H \}$ interpreted as investment amounts.
Reaction functions now go from being a theoretical concept to a most practical tool.
Conditional investment strategies provide an excellent use case for smart contracts \citep[e.g.][]{WeberStaples2022}.
Payoffs for \emph{symmetric investment games} include a non-decreasing common value $v \colon A \to \mathbb{R}$ derived symmetrically from costly individual actions.%
\footnote{That is, for each outcome $a \in A$ and permutation $\pi \colon N \to N$, we have $v(a) = v(a_{\pi(1)}, \dots, a_{\pi(n)})$.}
Specifically, for each player $i$, and outcome $a \in A$,
\[
    u_i(a) = v(a) - a_i.
\]
As players are symmetric, \emph{coordinated} outcomes ($a_i = a_j$ for all players $i$ and~$j$) will be key. 
For convenience, we assume that these coordinated outcomes are Pareto ranked and that the higher players coordinate the better:
$u_i(\alpha, \dots, \alpha)$ is increasing in $\alpha \in \{0, \dots, H\}$.
Additional structure will be imposed on $v$ resulting in \emph{weakest-link} (Subsection \ref{SUB:WL}) and \emph{public-good} games (Subsection \ref{SUB:PG}), respectively.

In what follows, we focus on monotone reaction functions (i.e., players commit to invest more the more others invest).
Monotone reaction functions guarantee fixed-point existence and makes computation of such fixed points easy.
Let $\mathcal{M}_i \subseteq \mathcal{R}_i$ denote the set of player $i$'s monotone reaction functions.

As action spaces $A_i$ are identical and contribution to the total value~$v$ is symmetric across players, it is natural to consider \emph{symmetric} reaction functions.
In essence, the identities of the other players should not matter.
A reaction function $R_i \in \mathcal{R}_i$ is \textbf{symmetric} if, for each $a_{-i} \in A_{-i}$ and permutation $\pi \colon N \to N$ with $\pi(i) = i$, we have $R_i(a_{-i}) = R_i(a_{\pi(-i)})$ with $a_\pi \equiv (a_{\pi(1)}, \dots, a_{\pi(n)}) \in A$.
Let $\mathcal{S}_i \subseteq \mathcal{M}_i$ denote the set of player $i$'s symmetric and monotone reaction functions.%
\footnote{%
For instance, in oligopoly models, a firm's profit typically is unchanged if its competitors' actions (prices, quantities) are permuted.
In this case, a symmetric reaction function specifies the same reaction to two such permuted action profiles.}

With symmetric reaction functions, it becomes meaningful to compare functions across players.
We say that $i$ and $j$ play \emph{the same} reaction function, denoted $R_i \cong R_j$, whenever $R_i(x) = R_j(x)$ for each $x \in \{ 0, \dots, H \}^{n-1}$. That is, players $i$ and $j$ react in the same way to the actions of others.
Let $\mathcal{N} \subset \mathcal{S} \subset \mathcal{R}$ be the set of profiles in which \emph{all} players react in the same way, that is, $R_i \cong R_j$ for all players $i$ and $j$.
We think of such profiles as \textit{social norms} or conventions, and say that a reaction function $R^*_i$ is \emph{norm-proof} if, for every norm~$R \in \mathcal{N}$, playing $R^*_i$ against the norm $R$ yields at least as high payoff as conforming to it.%
\footnote{Technically, the property bears resemblance to the ``evolutionary stable strategy'' (ESS) refinement of Nash equilibrium prominent in evolutionary game theory \citep[][]{MaynardSmithPrice1973}. Further evidence to this point is given in the Appendix, where we simulate ``evolution'' of a pool of reaction functions in a public-good setting.
The simulations indicate that evolutionary selection indeed would push towards a pool of mainly norm-proof reactions.}
\begin{definition}
    Reaction function $R^*_i \in \mathcal{M}_i$ is \textbf{norm-proof} if, for each norm $R \in \mathcal{N}$,
    \[
        U_i(R^*_i, R_{-i}) \geq U_i(R).
    \]
\end{definition}

This is not an equilibrium property, in the sense that we do not require $R^*_i$ to be optimal against a specific $R_{-i}$, but rather an alternative interpretation of safe play.
Even if you do not know which norm $R$ the others might settle on, playing $R^*_i$ is a safe and reasonable choice at least as good as conforming to the norm.

Norm-proof reaction functions turn out to have a certain structure in symmetric investment games. 
We say that a reaction function $R_i$ exhibits \emph{conditional collaboration} if it satisfies the following two criteria:
if the others coordinate (play the same action), then $i$ does as well; 
if not, then $i$ matches at least the minimum investment and at most the maximum investment.
That is, $R_i \in \mathcal{R}_i$ exhibits \textbf{conditional collaboration} if, for each $\alpha \in \{ 0, \dots, H\}$, $R_i(\alpha,\dots,\alpha) = \alpha$ and, for all remaining $a_{-i} \in A_{-i}$, 
\[
    \textstyle \min_{j \neq i} a_j \leq R_i(a_{-i}) \leq \max_{j \neq i} a_j.
\]

The principle of conditional collaboration is familiar from behavioral economics, where experimental evidence has shown that participants are often willing to contribute more the more others contribute \citep[e.g.][]{Fischbacheretal2001}.
This has typically been rationalized by preferences reflecting altruism or reciprocity \citep[e.g.][]{Sugden1984,Croson2007}.
We offer an alternative perspective where conditional collaboration arises even without such ``moral preferences''.
For reaction functions, conditional collaboration can  be rationalized through norm-proofness.

\begin{observation} \label{OBS:normproof}
    For symmetric investment games, a monotone reaction function is norm-proof if and only if it exhibits conditional collaboration.
\end{observation}

There is a unique reaction function exhibiting conditional collaboration if there are only two players (``match the other's action''), %
which is safe.
However, with more players, the class is large and may contain unsafe reactions.

Next, we present a novel consistency property to capture that $i$'s reaction foremost should be determined by $i$'s own payoffs.
The reaction should therefore be consistent across \emph{payoff-equivalent} action profiles.
Formally, if $i$'s payoffs are the same at $a_{-i}$ and $a'_{-i}$ regardless of the action that $i$ takes, so $u_i(a_i,a_{-i}) = u_i(a_i,a'_{-i})$ for each action~$a_i$, then $i$'s reaction should also be the same, $R_i(a_{-i}) = R_i(a'_{-i})$.
This is a refinement of the symmetry property.

\begin{definition}
    Reaction function $R_i \in \mathcal{R}_i$ is \textbf{payoff consistent} if, for each $\{ a_{-i}, a'_{-i} \} \subseteq A_{-i}$,
    \[
        \left ( u_i(a_i, a_{-i}) = u_i(a_i, a'_{-i}) \text{ for each } a_i \in A_i \right ) 
        \implies R_i(a_{-i}) = R_i(a'_{-i}).
    \]
\end{definition}

It is immediate that payoff-consistent reaction functions exist (e.g., constant reactions).
The best-reply reaction (with a fixed tie-breaking rule if needed) is also payoff consistent.
For many games, the property will have no bite whatsoever as there might not be any payoff-equivalent action profiles.
However, in the games that we explore next in Subsections~\ref{SUB:WL} and~\ref{SUB:PG}, payoff consistency combines with norm-proofness to give a unique recommendation.

\subsection{Weakest-link games} \label{SUB:WL}

In weakest-link (or ``minimum-effort'') games, %
the minimum investment determines the players' common value $v(a)$; the higher this is, the better for all \citep[e.g.][]{Hirshleifer1983}.
These games are abundant in practice and arise, for instance, when divisions within a firm contribute independently to a joint project for which the lowest-quality input disproportionally affects the output \citep{KnezCamerer1994,Kunreuther2009,GudmundssonJEBO2022}. Specifically, payoffs are given by 
\[
    u_i(a) = \lambda \cdot \textstyle \min_j a_j - a_i
\]
with parameter $\lambda > 1$.
Hence, players wish to match the minimum investment of the others and ideally do so at~$H$.
Conventional safe play amounts to taking action $a_i = 0$ with $\ubar{v}_i = u_i(0,\dots,0) = 0$.
Moreover, action profiles $a_{-i}$ and $a'_{-i}$ are payoff equivalent whenever $\min_{j \neq i} a_j = \min_{j \neq i} a'_j$.

The best-reply reaction $\textit{BR}_i$ is particularly simple for weakest-link games and entails matching the minimum investment, $\textit{BR}_i(a_{-i}) = \min_{j \neq i} a_j$.
Not only is it safe, it is also the only norm-proof and payoff-consistent reaction function.%
\footnote{In fact, $\textit{BR}_i$ is the only norm-proof reaction that is safe for all parameter values $\lambda > 1$.}

\begin{theorem} \label{TH:WLchar}
    For weakest-link games, reaction function $R_i \in \mathcal{R}_i$ is norm-proof and payoff consistent if and only if $R_i = \textit{BR}_i$.
\end{theorem}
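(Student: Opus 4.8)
The plan is to prove both implications, leaning on Theorem~\ref{TH:normproof} for the structural content of norm-proofness and on a clean description of payoff-equivalence in weakest-link games for the remaining work. For the ``if'' direction, I would first check that $\textit{BR}_i(a_{-i}) = \min_{j\neq i} a_j$ is payoff consistent: in a weakest-link game $u_i(a_i,a_{-i}) = \lambda\min\{a_i,\min_{j\neq i}a_j\}-a_i$ depends on $a_{-i}$ only through $\min_{j\neq i}a_j$, so any two profiles $a_{-i},a'_{-i}$ that are payoff-equivalent for $i$ satisfy $\min_{j\neq i}a_j = \min_{j\neq i}a'_j$ and hence $\textit{BR}_i(a_{-i}) = \textit{BR}_i(a'_{-i})$. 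For norm-proofness, note that $\textit{BR}_i$ is monotone and exhibits conditional collaboration ($\textit{BR}_i(\alpha,\dots,\alpha)=\alpha$, and otherwise $\min_{j\neq i}a_j = \textit{BR}_i(a_{-i}) < \max_{j\neq i}a_j$), so norm-proofness follows from Theorem~\ref{TH:normproof}; alternatively one can verify norm-proofness directly against an arbitrary norm, as the second part below does for constant norms.

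For the ``only if'' direction, suppose $R_i \in \mathcal{M}_i$ is norm-proof and payoff consistent. The first key step is to identify exactly when two profiles of others' actions are payoff-equivalent for $i$: I would show that $u_i(a_i,a_{-i}) = u_i(a_i,a'_{-i})$ for all $a_i \in A_i$ holds if and only if $\min_{j\neq i}a_j = \min_{j\neq i}a'_j$, the nontrivial direction being that if $m := \min_{j\neq i}a_j < m' := \min_{j\neq i}a'_j$, then $m \le H-1$ and taking $a_i = m+1$ separates the payoffs. Consequently, payoff consistency is equivalent to the statement that $R_i(a_{-i})$ depends only on $m = \min_{j\neq i}a_j$, so we may write $R_i(a_{-i}) = f(m)$ for a function $f \colon \{0,\dots,H\} \to \{0,\dots,H\}$. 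By Theorem~\ref{TH:normproof}, norm-proofness implies $R_i$ exhibits conditional collaboration, hence $f(m) = R_i(m,\dots,m) = m$ for every $m \in \{1,\dots,H\}$; thus $R_i(a_{-i}) = \min_{j\neq i}a_j = \textit{BR}_i(a_{-i})$ whenever $\min_{j\neq i}a_j \ge 1$.

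It remains to show $f(0) = 0$, which I would obtain by invoking norm-proofness against the ``all-zero'' norm $R \in \mathcal{N}$ in which every player plays the constant reaction $\equiv 0$. This norm has unique fixed point $(0,\dots,0)$, so $U_i(R) = \ubar{v}_i = 0$, whereas $(R_i, R_{-i})$ has unique fixed point $(f(0),0,\dots,0)$ with payoff $\lambda\cdot 0 - f(0) = -f(0)$; norm-proofness forces $-f(0) \ge 0$, i.e. $f(0)=0$. Combining with the previous paragraph gives $R_i(a_{-i}) = \min_{j\neq i}a_j$ for all $a_{-i}$, that is $R_i = \textit{BR}_i$.

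The main obstacle I anticipate is not a single hard computation but the care needed at the $m=0$ boundary: conditional collaboration pins down the diagonal only for $\alpha \ge 1$, and with two players there is no non-constant profile having $\min_{j\neq i}a_j = 0$ to exploit, so the extra norm-proofness argument against the all-zero norm (which also recovers the safety floor $\ubar{v}_i = 0$) is precisely what closes the gap uniformly in $n$. A secondary point requiring care is the ``payoff-equivalence $\iff$ equal minimum-of-others'' lemma, since payoff consistency only becomes a usable constraint once that equivalence relation is made explicit; once it is, the rest is bookkeeping.
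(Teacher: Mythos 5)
Your proposal is correct and follows essentially the same route as the paper's proof: payoff consistency (via the observation that $u_i$ depends on $a_{-i}$ only through $\min_{j \neq i} a_j$) collapses $R_i(a_{-i})$ to $R_i(m,\dots,m)$ with $m = \min_{j \neq i} a_j$, and norm-proofness---translated into conditional collaboration by Theorem~\ref{TH:normproof}---pins down $R_i(m,\dots,m) = m$, giving $R_i = \textit{BR}_i$, with the converse direction checked exactly as you do. The only point where you go beyond the paper's three-line argument is the $m = 0$ boundary, which you close by testing $R_i$ against the all-zero norm; this extra step is well taken, since conditional collaboration as formally defined fixes the diagonal only for $\alpha \in \{1,\dots,H\}$, so the paper's assertion ``$R_i(m,\dots,m) = m$'' is, strictly read, not justified at $m = 0$ without precisely the kind of argument you supply.
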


Playing $BR_i$ against a profile of reaction functions that exhibit conditional collaboration always supports the high-investment outcome $(H,\dots, H)$ in reaction-function equilibrium.

In weakest-link games, there are compelling reasons to expect that players use monotone reaction functions, in which case we obtain an additional argument in favor of $\textit{BR}_i$. 
For this, we restrict attention to what we call \emph{high-risk} weakest-link games with parameter $\lambda$ such that $1 < \lambda < H / (H-1)$.
That is, coordination gains are small relative to effort costs to the point that coordinating at $0$ is superior to playing $H$ against $H-1$.
This does not mean that there is no gain from coordinating at a higher investment level---the payoff is still higher at $(H,\dots, H)$ than at $(0,\dots,0)$---rather, it means that higher investments come with greater risk.%
\footnote{%
For comparison, when \citet{vanHuycketal1990} found that experiment participants failed to coordinate, the setting corresponded to $H = 6$ and $\lambda = 2$.
This is a much easier environment to coordinate in than high-risk games, where we would have $\lambda < H/(H-1) = 1.2$.}

We define reaction function $R_i \in \mathcal{R}_i$ to be \textbf{weakly dominant against monotone play} if, for each $R'_i \in \mathcal{R}_i$ and $R_{-i} \in \mathcal{M}_{-i}$, $U_i(R) \geq U_i(R'_i,R_{-i})$.
That is, the player cannot do better than $R_i$ against monotone reactions $R_{-i} \in \mathcal{M}_{-i}$.
Note that the statement does not restrict $i$'s own reaction ($R_i$ and $R'_i$, respectively) to be monotone.
However, the characterization in Theorem~\ref{TH:WLmonotone} yields a unique reaction function, namely $\textit{BR}_i$, which is monotone.

\begin{theorem} \label{TH:WLmonotone}
    For high-risk weakest-link games, reaction function $R_i \in \mathcal{R}_i$ is weakly dominant against monotone play if and only if $R_i = \textit{BR}_i$.
\end{theorem}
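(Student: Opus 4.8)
The plan is to prove both implications directly. The one structural fact I will lean on throughout is that, at any fixed point $a$ of a profile in which player $i$ plays $\textit{BR}_i$, coordinate $i$ forces $a_i=\textit{BR}_i(a_{-i})=\min_{j\neq i}a_j$, so that $\min_j a_j=a_i$ and $u_i(a)=\lambda a_i-a_i=(\lambda-1)a_i$; thus $i$'s payoff under $\textit{BR}_i$ is dictated entirely by how large a common floor can be sustained. I will also use that a profile of monotone reactions is a monotone self-map $F$ of the finite lattice $A$, so $\mathcal{E}(F)\neq\emptyset$ (Tarski) and, moreover, iterating $F$ upward from any post-fixed point $x$ (one with $x\leq F(x)$) terminates after finitely many steps at a fixed point of $F$ that dominates $x$.

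For the ``only if'' direction, suppose $R_i\neq\textit{BR}_i$: there is $a_{-i}$ with $R_i(a_{-i})=:r\neq m$, where $m:=\min_{j\neq i}a_j$. Let $R_{-i}\in\mathcal{M}_{-i}$ be the constant profile with $R_j\equiv a_j$ for each $j\neq i$. Then $(R_i,R_{-i})$ has the single fixed point $(r,a_{-i})$, so $U_i(R_i,R_{-i})=\lambda\min\{r,m\}-r$, which is $(\lambda-1)r$ when $r<m$ and $\lambda m-r<(\lambda-1)m$ when $r>m$. Against the same $R_{-i}$, the constant deviation $R'_i\equiv m$ produces the fixed point $(m,a_{-i})$ with payoff $(\lambda-1)m$. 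Since $\lambda>1$ and $m\neq r$, in both cases $U_i(R'_i,R_{-i})=(\lambda-1)m>U_i(R_i,R_{-i})$, contradicting weak dominance of $R_i$ against monotone play. Hence $R_i(a_{-i})=\min_{j\neq i}a_j$ for every $a_{-i}$, i.e.\ $R_i=\textit{BR}_i$. (This half in fact needs only $\lambda>1$, not the high-risk bound.)

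For the ``if'' direction, fix an arbitrary $R_{-i}\in\mathcal{M}_{-i}$ and deviation $R'_i\in\mathcal{R}_i$, and set $F:=(\textit{BR}_i,R_{-i})$. By the remarks above, $\mathcal{E}(F)\neq\emptyset$ and every $\hat a\in\mathcal{E}(F)$ gives $u_i(\hat a)=(\lambda-1)\hat a_i\geq 0$, so $U_i(\textit{BR}_i,R_{-i})\geq 0$. Now pick any $a'\in\mathcal{E}(R'_i,R_{-i})$ (if there is none, $U_i(R'_i,R_{-i})=-\infty$ and we are done), and let $m':=\min_{j\neq i}a'_j$. If $a'_i\leq m'$: then $F(a')\geq a'$, since coordinate $i$ gives $\textit{BR}_i(a'_{-i})=m'\geq a'_i$ while for $j\neq i$ we have $R_j(a'_{-j})=a'_j$ because $a'$ is a fixed point of $(R'_i,R_{-i})$; iterating $F$ upward from $a'$ yields $\hat a\in\mathcal{E}(F)$ with $\hat a\geq a'$, so $\hat a_i=\min_{j\neq i}\hat a_j\geq m'\geq a'_i$ and $u_i(a')=(\lambda-1)a'_i\leq(\lambda-1)\hat a_i=u_i(\hat a)\leq U_i(\textit{BR}_i,R_{-i})$. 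If instead $a'_i>m'$: then $\min_j a'_j=m'$ and $u_i(a')=\lambda m'-a'_i$, and this is where high risk bites. Since $\lambda<H/(H-1)$ we have $\lambda/(\lambda-1)>H\geq a'_i$, hence $\lambda(a'_i-1)<a'_i$; as $m'\leq a'_i-1$ and $\lambda>0$, this gives $\lambda m'<a'_i$, so $u_i(a')=\lambda m'-a'_i<0\leq U_i(\textit{BR}_i,R_{-i})$. Either way, every fixed point of $(R'_i,R_{-i})$ awards $i$ at most $U_i(\textit{BR}_i,R_{-i})$, hence $U_i(R'_i,R_{-i})\leq U_i(\textit{BR}_i,R_{-i})$ and $\textit{BR}_i$ is weakly dominant against monotone play.

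The delicate step is the sub-case $a'_i>m'$ of the ``if'' direction: a clever deviation can manufacture fixed points at which $i$ ``overshoots'' the minimum of the others, and the only way to guarantee these are no better than what $\textit{BR}_i$ secures is to know that overshooting is strictly loss-making---which is precisely the high-risk condition $\lambda<H/(H-1)$. Dropping it breaks the claim: one can engineer monotone $R_{-i}$ whose only $\textit{BR}_i$-fixed point is the all-zero profile, while a constant high deviation reaches a fixed point with a strictly positive common value, beating $\textit{BR}_i$.
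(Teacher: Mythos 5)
Your proof is correct and follows essentially the same route as the paper's: the uniqueness half uses constant monotone opponents exactly as in the paper (your constant deviation $R'_i\equiv m$ plays the role of $\textit{BR}_i$ there, with identical payoffs), and the dominance half rests on the same three ingredients---existence of fixed points of $(\textit{BR}_i,R_{-i})$ via monotone iteration, nonnegativity of $i$'s payoff at any such fixed point, and the high-risk bound ruling out profitable ``overshooting'' fixed points. The only cosmetic differences are that you argue directly rather than by contradiction and start the upward iteration at the deviation's fixed point $a'$ instead of at $(\textit{BR}_i(a'_{-i}),a'_{-i})$, which is the same construction.
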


The high-risk qualification is restrictive, but this is to be expected.
There are many reaction functions that ``solve'' weakest-link games in general.
The purpose of Theorem~\ref{TH:WLmonotone} is partly to identify a strong-enough condition under which we can make a unique, precise recommendation of which reaction function to play.

There are other variations on the same idea such as weak dominance against safe play, where one assumes instead the others to play safe.
In some sense, safe play is more restrictive than monotone play.
Therefore, there are now many weakly dominant reaction functions, for instance all that exhibit conditional collaboration (including $\textit{BR}_i$).
One can even define weak dominance in general (so not only against safe or monotone play), but we argue in Remark~\ref{REM:WD} that this is too demanding. 

\begin{remark}[Weak dominance in general] \label{REM:WD}
    Consider a two-player weakest-link game.
    Suppose, for contradiction, that there exists a reaction function $\hat{R}_i \in \mathcal{R}_i$ that is weakly dominant in general (so not only against monotone play, for instance).
    Suppose first the other player chooses $R_j$, which is to always play~$H$.
    Then we must have $\hat{R}_i(H) = H$ to ensure that the optimal outcome $(H,H)$ is a fixed point.
    Consider next the case in which the other player chooses reaction $R'_j$ such that $R'_j(H-1) = H$ and otherwise $R'_j(a_i) = 0$, which is neither monotone nor safe (in a high-risk setting). 
    The best attainable fixed point now is $(a_i,a_j) = (H-1,H)$, but this would require $\hat{R}_i(H) = H-1$, which is a contradiction.
    Hence, no reaction function is optimal against both $R_j$ and~$R'_j$.
    \hfill $\circ$
\end{remark}

The fundamental message to take from this section is that reaction functions offer a new way to overcome trust issues in weakest-link games.
Reaction-function play does not rely on building trust between the players \citep{ChenChen2011}, on putting hope to communication routines \citep{BlumeOrtmann2007,GudmundssonJEBO2022}, or on establishing a central authority to monitor and punish inefficient action \citep{Holmstrom1982}.
None of this is needed---using reaction functions as commitment devices would in principle allow any large group of strangers to coordinate on their very first attempt.

Once we dig deeper and identify which particular reaction function to play, it is no surprise that the best-reply reaction function stands out (i.e., to match the weakest link).
In a sense, Theorems~\ref{TH:WLchar} and~\ref{TH:WLmonotone} together with the surrounding discussion make for a robustness exercise that verifies that, indeed, the intuitive best-reply reaction is the ideal reaction function to play in this case.

\subsection{Public-good games} \label{SUB:PG}

We turn to a different class of games for which, in contrast, best replies are ineffective.
We have already seen that reaction functions introduce new ways for players to coordinate around Pareto-improving outcomes in strictly dominated actions (e.g. in the \emph{Prisoner's dilemma}).
Now, we will further expand on this in the particular context of public-good games, in which each player's investment benefits everybody, but the investment is a net loss to the player herself \citep[e.g.][]{BergstromEtAl1986}.
In public-good games, payoffs are given by
\[
    u_i(a) = \lambda \cdot \textstyle \sum_j a_j - a_i
\]
with parameter $1/n < \lambda < 1$.%
\footnote{Outside this parameter range, high investment (for $\lambda \geq 1$) or zero investment (for $\lambda \leq 1/n$) is both safe, dominant, and efficient.}
High investment $(H, \dots, H)$ maximizes total payoffs $\sum_j u_j(a)$ but, in the strategic game, free-riding through~$a_i = 0$ is dominant.
Hence, the best reply is simply to free-ride and never invest.
Again, conventional safe play yields $\ubar{v}_i = u_i(0,\dots,0) = 0$.
Now, action profiles $a_{-i}$ and $a'_{-i}$ are payoff equivalent if they add up to the same:
$\sum_{j \neq i} a_j = \sum_{j \neq i} a'_j$.

Whereas in the weakest-link games the key idea was to react to the minimum action of the others, in the public-good setting it is instead the \emph{average} contribution that turns out to be essential.
In particular, norm-proof and payoff-consistent reaction functions must match the average, either rounded up or down.%
\footnote{For $z \in \mathbb{R}$, let $\lfloor z \rfloor, \lceil z \rceil \in \mathbb{Z}$ be the largest (smallest) integer no larger (smaller) than~$z$.}

\begin{theorem} \label{TH:PGchar}
    For public-good games, reaction function $R_i \in \mathcal{M}_i$ is norm-proof and payoff consistent if and only if there is a monotone function $f_i \colon \mathbb{R} \to \mathbb{Z}$ such that, %
    for each $a_{-i} \in A_{-i}$,
    \[
        R_i(a_{-i}) 
        = f_i \left (\frac{1}{n-1} \textstyle \sum_{j \neq i} a_j \right )
        \in \left \{
            {\left \lfloor \frac{1}{n-1} \textstyle \sum_{j \neq i} a_j \right \rfloor}, 
            {\left \lceil \frac{1}{n-1} \textstyle \sum_{j \neq i} a_j \right \rceil}
            \right \}.
    \]
\end{theorem}

Theorem~\ref{TH:PGchar} leaves two particularly focal candidates within the class of monotone, norm-proof, and payoff-consistent reactions:
to match the average, either consistently rounding up or consistently rounding down.
Of these, we will argue that rounding down (to be denoted $R^*_i$ below) fares better.
Whereas rounding up is a more ``generous'' commitment, it does not effectively discipline free-riding, much like overly cooperative strategies do poorly in a repeated \textit{Prisoner's dilemma} \citep[compare][]{Axelrod1984}.
If all use reactions that round up, players would beneficially deviate by undercutting (for instance, by playing the constant reaction associated to investment~$H-1$).
For contrast, rounding down supports high investment in reaction-function equilibrium.
On top of this, rounding down is safe across all parameter values, whereas ``rounding up'' is not for certain parameter values.

Beyond our equilibrium and refinement analysis, it is useful to benchmark reactions against a simple normative criterion:
choose the investment that maximizes total welfare, subject to individual safe play.
This also connects to the the experimental literature on social preferences \citep[e.g.][]{CharnessRabin2002}, where players sometimes behave as if they partially internalize others' payoffs.
Specifically, we define a reaction function $R_i \in \mathcal{R}_i$ to be \textbf{welfare-maximizing conditional on safe play} if, for each $a_{-i} \in A_{-i}$,  
\[
    R_i(a_{-i}) \in \textstyle \arg \left ( \max_{a_i \in A_i} \sum_j u_j(a) \quad \text{subject to} \quad u_i(a) \geq \ubar{v}_i \right ).
\]
As a parallel to Theorem~\ref{TH:WLmonotone}, we define \emph{high-risk} public-good games with parameters $\lambda$ such that $1/n < \lambda < H / (nH-1)$.
The intuition is the same:
there are still gains from coordinating on higher investment, but the returns are small compared to the risks involved in overinvesting relative to the others.
For these games, matching the average, rounded down, is welfare-maximizing subject to safe play.%
\footnote{For weakest-link games, this property instead singles out $\textit{BR}_i$.
There is no reason for $i$ to exceed the minimum investment as this creates no added value to the others but only adds to $i$'s investment costs;
it is also not beneficial for $i$ to undercut the minimum, because this harms all players more than it saves on $i$'s costs.}

\begin{theorem} \label{TH:PG}
    For high-risk public-good games, $R_i \in {\cal M}_i$ is welfare-maximizing conditional on safe play if and only if  \[
        R_i^*(a_{-i}) = \left \lfloor \frac{1}{n-1} \textstyle \sum_{j \neq i} a_j \right \rfloor.
    \]
\end{theorem}

In this way, reaction functions provide new ways to overcome tendencies to free-ride in public-goods games, which do not rely on, for instance, players forming sanctioning institutions \citep{KosfeldEtAl2009}, being altruistic \citep{AndersonEtAl1998}, or the introduction of obligations \citep{Galbiati2008}.
Our main candidate $R^*_i$ is to match the average, rounded down, which appears to be a compelling rule-of-thumb for reaction-function play in public-good games.
We do not claim that this reaction function always is the optimal strategic choice (e.g., it is better to free-ride if the others unconditionally contribute), but it appears always quite reasonable:
it is simple, safe, and ensures that the player is no worse off than the average.
If faced with a pool of free-riders, it specifies to free-ride as well; 
against both conditional and unconditional collaborators, high investment is a fixed point. 
In this way, it does reasonably well against most reaction functions.
This is further corroborated by the simulation study in the Appendix.
There, in a specific public-good setting, we find compelling evidence that a process of evolutionary selection would converge towards functions alike~$R^*_i$.

\section{Concluding remarks} \label{SEC:conclusions}

Blockchain-based smart contracts make contingent commitments both credible and executable.
In our framework, players commit to reaction functions and a coordinating contract implements a fixed point of the resulting profile.
This is particularly well suited for investment environments, where players can escrow deposits and condition their investment on others' investments without risking over-exposure.
In such settings, reaction functions can help players effectively implement coordination that resolves issues of trust (as in weakest-link games) and disciplines free-riding (as in public-good games).
We posit some additional requirements (safe play, norm-proofness, and payoff consistency), which jointly yield simple and practically implementable recommendations.

Compared to the approaches in \citet{Tennenholtz2004} and \citet{Kalaietal2010}, we submit that the reaction functions bring automated play a step closer to practical application.
For example, to obtain collaboration in program equilibrium of the \emph{Prisoner's dilemma}, \citet{Tennenholtz2004} suggests a program of the form ``if my program is identical to my opponent's program, then the program collaborates; otherwise, it defects''.
Conditioning your action on the program of the opponent is fragile, as the smallest difference in programs yields defection.
Even though players may be willing to collaborate, programs can be formulated in countless ways and there is a risk that a collaborative outcome would never prevail in practice.
For instance, imagine that the other player chooses a different but cooperative-looking program (e.g., a program that unconditionally always collaborates).
Even though both are collaborative, the program mismatch still leads to defection.
In contrast, our reaction functions specify an action for each action profile of the others.
This may be more robust, for instance as illustrated in the cases of weakest-link and public-good games.
For example, if players exhibit conditional collaboration in a weakest-link game, we will not automatically resort to the worst-case outcome if some players deviate from the best-reply reaction function. 

We end on a few directions for future research.
One is to explore reaction functions in games with incomplete information.
For instance, the idea of ``playing safe'' seems appealing even if, say, you do not know the preferences of the other players.
A first step in that direction is Theorem~\ref{TH:WLmonotone}, which shows that the best-reply reaction is weakly dominant against monotone play in high-risk weakest-link games. 
As weak dominance is a statement only on the player's own payoffs, knowledge of the others' payoffs is not important.
One can ask more generally, in what games are best-reply reactions $\textit{BR}_i$ weakly dominant against monotone play?
Theorem~\ref{TH:WLmonotone} then points to some conditions:
(a) payoff $u_i(a)$ is non-decreasing in $a_{-i} \in A_{-i}$, (b) $\textit{BR}_i$ is monotone (to ensure fixed points  against monotone play), and (c), for each outcome $a \in A$,
\[
    a_i > \textit{BR}_i(a_{-i}) \implies u_i(a) \leq u_i(0,\dots,0).
\]
Under these assumptions, $\textit{BR}_i$ is weakly dominant against monotone play.

There are many games, different from high-risk weakest-link games, for which this applies.
The common theme is that (a) $i$ is not harmed by $j$'s actions, that (b) there are some complementarities between player contributions, and (c) that there is a ``target'' that the player wants to match but not exceed.
This target could for instance be derived from Kantian fairness reasoning \citep[e.g.][]{Roemer2010,AlgerWeibull2013} combined with a strong disutility from being ``exploited'' (investing more than what is ``right'' given the investments of the others to the others' benefit) or some form of inequality aversion \citep{FehrSchmidt}.
Given the prominent position of the best-reply reaction function, which is simple and intuitive, it would be interesting to map out exactly when it is also optimal.

\bibliography{bibliography}
\bibliographystyle{abbrvnat}

\section*{Appendix: Proofs} \label{APP:proofs}

\setcounter{section}{3}
\setcounter{theorem}{1}
\begin{theorem} 
    For two-player games, outcome $a \in A$ is supported in reaction-function equilibrium if and only if $u_i(a) \geq \ubar{v}_i$ for each player~$i$.
\end{theorem}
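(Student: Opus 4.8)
The plan is to prove the two implications separately. The ``only if'' direction is short and exploits that a player can always guarantee her maxmin payoff with a constant reaction. The ``if'' direction is the substantive part and is handled by an explicit ``promise and threat'' construction whose correctness relies on there being exactly two players.

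For ``only if'', suppose $R$ is a reaction-function equilibrium supporting $a$, so that $a \in \mathcal{E}(R)$ and, by the definition of \emph{supported}, $U_i(R) = u_i(a)$ for each player $i$. Fix $i$ and let $R'_i$ be the constant reaction that always plays a safe action $\ubar{a}_i$. Then $(R'_i, R_{-i})$ has a unique fixed point, whose $i$-th coordinate is $\ubar{a}_i$; hence $U_i(R'_i, R_{-i}) \geq \min_{a_{-i} \in A_{-i}} u_i(\ubar{a}_i, a_{-i}) = \ubar{v}_i$. Since $R$ is an equilibrium, $u_i(a) = U_i(R) \geq U_i(R'_i, R_{-i}) \geq \ubar{v}_i$.

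For ``if'', fix an outcome $a^*$ with $u_i(a^*) \geq \ubar{v}_i$ for each $i$, and for each player $i$ define the reaction $R_i$ by $R_i(a^*_{-i}) = a^*_i$ and, for every $a_{-i} \neq a^*_{-i}$, $R_i(a_{-i}) \in \arg\min_{b_i \in A_i} u_{-i}(b_i, a_{-i})$ (any minimizer). I would then run three checks. (i) $a^*$ is a fixed point of $R$. (ii) $a^*$ is the Pareto-best fixed point: if $b \in \mathcal{E}(R)$ with $b \neq a^*$, then $b_i \neq a^*_i$ for each player $i$ (a single coordinate agreeing with $a^*$ would, through the opponent's reaction, force $b = a^*$), so for each $i$ the opponent's action $b_{-i}$ is a minimizer of $u_i(b_i, \cdot)$ and therefore $u_i(b) \leq \ubar{v}_i \leq u_i(a^*)$; hence $R$ is unambiguous and supports $a^*$. (iii) $R$ is an equilibrium: for any deviation $R'_i$ and any $b \in \mathcal{E}(R'_i, R_{-i})$, either $b_i = a^*_i$, which forces $b = a^*$ through the opponent's reaction, or $b_i \neq a^*_i$, in which case the opponent's reaction makes $b_{-i}$ a minimizer of $u_i(b_i, \cdot)$ and so $u_i(b) \leq \ubar{v}_i \leq u_i(a^*)$; either way $U_i(R'_i, R_{-i}) \leq u_i(a^*) = U_i(R)$, and the empty-fixed-point case gives $U_i(R'_i, R_{-i}) = -\infty$. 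Together, (i)--(iii) show that $R$ is a reaction-function equilibrium supporting $a^*$.

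The point that needs the most care is the fixed-point bookkeeping in (ii) and (iii): one must verify that, both for the original profile and after any deviation, \emph{every} surviving fixed point other than $a^*$ is weakly dominated by $a^*$. Both reductions use the same observation, which is genuinely special to $n = 2$: committing to minimize the opponent's payoff $u_{-i}$ imposes no payoff-relevant constraint on one's own payoff, so the opponent's maxmin bound applies at every ``off-target'' fixed point. This is precisely the mechanism that breaks down with more players, in line with Theorem~\ref{TH:moreplayers}.
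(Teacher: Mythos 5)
Your proof is correct and follows essentially the same route as the paper: the same ``promise and threat'' construction (match $a^*$, otherwise minimize the opponent's payoff) for sufficiency, and a constant deviation to the maxmin action for necessity, which the paper phrases by contradiction. Your explicit bookkeeping of the fixed points (unambiguity and the post-deviation cases) just spells out what the paper's argument leaves implicit.
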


\begin{proof}
    \textsc{Part I:}
    Let $a^* \in A$ be such that, for each player $i$, $u_i(a^*) \geq \ubar{v}_i$.
    For each player $i$, define $R_i \in \mathcal{R}_i$ such that $R_i(a^*_{-i}) = a^*_i$, so $a^* \in \mathcal{E}(R)$, and otherwise $R_i(a_{-i})$ minimizes the other player's payoff:
    \[
        R_i(a_{-i}) \in \arg \min_{a_i \in A_i} u_{-i}(a_i, a_{-i}).
    \]
    
    Consider a potential deviation $R'_i \in \mathcal{R}_i$.
    If $\mathcal{E}(R'_i,R_{-i}) = \emptyset$, then $U_i(R'_i,R_{-i}) = - \infty < u_i(a^*)$ by convention, so the deviation is not beneficial.
    For each $a' \in \mathcal{E}(R'_i,R_{-i})$ that differs from $a^*$, $a'_{-i}$ minimizes $u_i$ given $a'_i$.
    That is,
    \[
        u_i(a') 
        = \min_{a_{-i} \in A_{-i}} u_i(a'_i,a_{-i})
        \leq \max_{a_i \in A_i} \min_{a_{-i} \in A_{-i}} u_i(a_i,a_{-i})
        = \ubar{v}_i
        \leq u_i(a^*).
    \]
    Hence, there is no beneficial deviation, so $R$ supports $a^*$ in reaction-function equilibrium.

    \bigskip \noindent
    \textsc{Part II:}
    For contradiction, say $\tilde{R} \in \mathcal{R}$ supports $\tilde{a} \in A$ in reaction-function equilibrium yet $\ubar{v}_i > u_i(\tilde{a})$ for some player~$i$. 
    Let $a'_i \in A_i$ be $i$'s maxmin strategy,
    \[
        a'_i \in \arg \max_{a_i \in A_i} \min_{a_{-i} \in A_{-i}} u_i(a_i,a_{-i}),
    \]
    and define $a'_{-i} = \tilde{R}_{-i}(a'_i)$.
    Let $R'_i \in \mathcal{R}_i$ be such that $R'_i(a_{-i}) = a'_i$ everywhere.
    Then $\mathcal{E}(R'_i,\tilde{R}_{-i}) = \{ a' \}$.
    Moreover, 
    \[
        u_i(a') 
        \geq \max_{a_i \in A_i} \min_{a_{-i} \in A_{-i}} u_i(a_i,a_{-i}) 
        = \ubar{v}_i 
        > u_i(\tilde{a}).
    \]
    Hence, $i$ deviates through~$R'_i$, contradicting that $\tilde{R}$ is a reaction-function equilibrium.
\end{proof}

\begin{theorem} 
    All outcomes can be supported in reaction-function equilibrium if
    \begin{itemize}[noitemsep]
        \item There are at least four players, or
        \item There are at least three players with at least three actions each.
    \end{itemize}
\end{theorem}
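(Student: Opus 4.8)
The plan is to construct, for any target outcome $a^* \in A$, a profile $R \in \mathcal{R}$ such that $\mathcal{E}(R) = \{a^*\}$ \emph{and} such that every unilateral deviation can only shrink the fixed-point set further, i.e.\ $\mathcal{E}(R'_i, R_{-i}) \subseteq \{a^*\}$ for every player $i$ and every $R'_i \in \mathcal{R}_i$. If we can do this, then $U_i(R'_i,R_{-i}) \leq U_i(R)$ automatically (either the deviation keeps $a^*$ as the unique fixed point, or it destroys all fixed points and yields payoff $-\infty$), and since $R$ is trivially unambiguous with the single fixed point $a^*$, it supports $a^*$ in reaction-function equilibrium. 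The whole content is therefore a combinatorial gadget: design reaction functions whose only ``consistent'' action profile is $a^*$, and which are robust in that no single player, by changing only their own reaction, can create a new fixed point.

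The key idea for robustness is that an outcome $a \neq a^*$ can fail to be a fixed point ``because of'' two different players disagreeing with it — then changing any one player's reaction still leaves at least one other player's commitment violated at $a$. Concretely, I would fix a reference profile and, for each ``wrong'' coordinate pattern, assign the blame to (at least) two players whose reactions are set to \emph{not} return the value $a$ prescribes. With $n \geq 4$ players this is easy to arrange while keeping each $R_i$ a well-defined function of $a_{-i}$: for any $a \neq a^*$, let $j,k$ be two players with $a_j \neq a^*_j$ (or, if only one coordinate differs, pick that player plus one arbitrary other and use that player's freedom in a coordinate where $a$ still can be ``vetoed''); set $R_j(a_{-j})$ and $R_k(a_{-k})$ to values different from $a_j$ and $a_k$ respectively, taking care that these assignments, ranging over all $a$, remain consistent as functions (this is where the counting matters — one must check that the demands placed on a single $R_j$ at different inputs never conflict, which they do not because each input $a_{-j}$ determines at most one ``forbidden'' value and there is always a legal alternative since $|A_j| \geq 2$). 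For the case of three players with at least three actions each, the extra action gives exactly the slack needed to perform the same two-veto construction — with only two actions and three players one of the two ``blamers'' may be forced into a conflicting assignment, which is why that case is genuinely excluded (and the excerpt notes a counterexample there). Throughout, one checks the constructed graph is acyclic or at least that the argument of Remark~\ref{REM:uniqueness} is not needed — here we argue directly about $\mathcal{E}$ rather than via monotonicity.

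The main obstacle I anticipate is the bookkeeping in verifying that the ``double-veto'' assignments are globally consistent: for a fixed player $j$, as $a$ ranges over all outcomes $\neq a^*$ for which $j$ is designated a blamer, the induced constraints ``$R_j(a_{-j}) \neq a_j$'' must all be simultaneously satisfiable, and moreover $R_j(a^*_{-j}) = a^*_j$ must hold. Two distinct outcomes $a, \tilde a$ with $a_{-j} = \tilde a_{-j}$ but $a_j \neq \tilde a_j$ would impose $R_j(a_{-j}) \notin \{a_j, \tilde a_j\}$, which needs $|A_j| \geq 3$ — so for the four-player, two-action case one must be careful to never designate the \emph{same} player as a blamer for two such ``twin'' outcomes, instead spreading the blame across the now-available fourth player. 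Laying out a clean assignment rule (e.g.\ ``blame the two lowest-indexed players whose coordinate differs from $a^*$, and if only one differs, additionally blame the lowest-indexed player overall, using a coordinate of $a^*$ where that player has an alternative'') and then checking it discharges all constraints is the crux; everything else is routine.
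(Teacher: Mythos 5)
Your reduction is exactly the paper's: make $a^*$ the unique fixed point of $R$ and arrange that at every outcome $a \neq a^*$ at least two players' reactions disagree with $a$ (``two blamers''), so that no unilateral change of a single $R_i$ can create a new fixed point, giving $\mathcal{E}(R'_i,R_{-i}) \subseteq \{a^*\}$ and hence support of $a^*$. But that reduction is the easy part; the entire content of the theorem is the combinatorial construction realizing it, and this is precisely what you defer (``laying out a clean assignment rule \dots is the crux''). Moreover, the concrete rule you sketch does not work in the binary-action case you need for $n \geq 4$: take $a$ differing from $a^*$ only in coordinate $j$ and $\tilde a$ differing in coordinates $j$ and $k$; your rule blames $k$ at both, and since $a_{-k} = \tilde a_{-k}$ this forces $R_k(a_{-k}) \notin \{a^*_k, \tilde a_k\} = A_k$, a contradiction. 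You flag this ``twin outcome'' conflict yourself and propose to ``spread the blame across the fourth player,'' but you never exhibit an assignment that is globally consistent, nor any construction at all for the three-player, three-action case. As it stands the proof is a correct plan with its central step missing.

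For comparison, the paper discharges the crux with two explicit gadgets whose two-unhappy-player property is verified directly. For $n \geq 4$ (two actions suffice) it uses a cyclic, neighbor-based rule: to support $(0,\dots,0)$, set $R_i(a_{-i}) = 1$ if $a_{i-1} = 0$ and $a_{i+1} \neq 0$ (indices mod $n$), and $R_i(a_{-i}) = 0$ otherwise; a short case analysis (using $n \geq 4$) shows every $a \neq (0,\dots,0)$ leaves at least two players unhappy. For $n = 3$ with three actions it uses a modular rule: $R_i(0,0) = 0$ and otherwise $R_i(a_{-i}) = (i - \sum_{j\neq i} a_j) \bmod 3$, so that at every $a \neq (0,0,0)$ exactly one player is happy. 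Arbitrary target outcomes follow by relabeling actions. Some such explicit, verified construction (or a genuinely checked blame-assignment rule avoiding the twin-outcome conflicts) is what your proposal still needs to be a proof.
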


\begin{proof}
    \noindent
    \textsc{Part I:}
    Suppose first there are at least four players.
    We label actions such that $\{0,1\} \subseteq A_i$ for each player~$i$.
    For the purpose of illustration, we wish to support the outcome $(0, \dots, 0)$.
    Define reaction functions $R_i \in \mathcal{R}_i$ as follows:
    \[
        R_i(a_{-i}) = \begin{cases}
            1 & \text{ if } a_{i-1} = 0 \text{ and } a_{i+1} \neq 0 \text{ (mod $n$)} \\
            0 & \text{ otherwise.}
        \end{cases}
    \]
    That is, $i$ is influenced only by their ``neighbors'' and takes action $1$ if the player before them chooses $0$ whereas the one after them doesn't.
    Figure~\ref{FIG:cycle} illustrates this case.
    
    \begin{figure}[!htb]
        \centering
        \begin{tikzpicture}
            \def\n{4};
            \foreach \x [evaluate=\x as \deg using 90-\x*(360/(\n+1))] in {0,...,\n}{
                \coordinate (\x) at (\deg:1cm);
                \fill (\x) circle (2pt);
            }
            \node at (0) [anchor=south] {\small $i$};
            \node [black!50] at (0) [anchor=north] {\small $1$};
            \node at (1) [anchor=south west] {\small $i+1$};
            \node [black!50] at (1) [anchor=east] {\small $1$};
            \node at (\n) [anchor=south east] {\small $i-1$};
            \node [black!50] at (\n) [anchor=west] {\small $0$};
            \foreach \x [evaluate=\x as \y using {mod(\x+1,(\n+1))}] in {0,...,\n}{
                \path[bend left, ->, shorten <= .3em, shorten >= .3em] (\x) edge (\y);
            }
        \end{tikzpicture}
        \caption{Caption}
        \label{FIG:cycle}
    \end{figure}
    
    As $R_i(0,\dots,0) = 0$, the outcome $(0, \dots, 0)$ is a fixed point.
    Say player~$i$ is \emph{unhappy} at outcome $a$ if $R_i(a_{-i}) \neq a_i$. 
    We will argue that there are at least two unhappy players at each $a \neq (0,\dots,0)$. 
    For instance, $R_i(1,\dots,1) = 0$, so all players are unhappy at $(1,\dots,1)$.
    We restrict to the most challenging cases, namely $a \in \{0,1\}^n$.
    It remains therefore to check outcomes $a$ that include at least one $0$ and at least one~$1$.
    
    Suppose for contradiction that there is at most one unhappy player at~$a$.
    By construction, $a$ contains at least one sequence of consecutive $1$'s (possibly a single~$1$) followed by a~$0$.
    The last player $m$ of the sequence is unhappy as $a_m = 1$ and $a_{m+1} = 0$.
    Therefore, $a$ can contain at most one such sequence (else there would be at least two unhappy players).
    Moreover, each player $i$ not the first of the sequence is unhappy as $a_{i-1} = a_i = 1$.
    Hence, the sequence contains at most two players.
    There are two options:
    if $a_i = 1$ and $a_j = 0$ otherwise, then $i-1$ and $i$ are unhappy;
    if $a_i = a_{i+1} = 1$ and $s_j = 0$ otherwise, then $i-1$ and $i+1$ are unhappy.
    (In the second case, that $i-1$ is unhappy hinges on $a_{i-2} = 0$.
    As then $a_{i-2} \neq a_{i+1}$, this uses that $n \geq 4$.)
    Hence, at all outcomes $a \neq (0,\dots,0)$ there are at least two unhappy players.

    Consider now a unilateral deviation $R'_i \in \mathcal{R}_i$ by player~$i$ at profile~$R$.
    For contradiction, suppose some $a \neq (0,\dots,0)$ is a fixed point at $(R'_i,R_{-i})$.
    Because there are at least two unhappy players at $a$ under~$R$, there remains at least one unhappy player at $a$ under $(R'_i,R_{-i})$.
    Hence, $a$ is not a fixed point.
    Therefore, $\mathcal{E}(R'_i,R_{-i}) \subseteq \{ (0, \dots, 0) \}$ for all unilateral deviations, so $R$ supports $(0, \dots, 0)$ in reaction-function equilibrium.
    This argument is done without any reference to payoffs and the outcome $(0,\dots,0)$ is chosen arbitrarily.
    Any other outcome $a$ can be supported in the same way by appropriate relabeling.

    \bigskip \noindent
    \textsc{Part II:}
    The approach partly mirrors that of \textsc{Part~I}.
    Suppose instead that there are three players $N = \{ 0,1,2 \}$ and at least three actions $\{0,1,2\} \subseteq A_i$. 
    Construct reactions $R_i$ to support outcome $(0,0,0)$ as follows.
    Set $R_i(0,0) = 0$ and otherwise, for each $a_{-i} \neq (0,0)$, set
    \[
        R_i(a_{-i}) = (i - \textstyle \sum_{j \neq i} a_j) \bmod n.
    \]
    Intuitively, $i$'s reaction is to make the total of $(R_i(a_{-i}), a_{-i})$ equal to~$i$, taken modulo~$n$.
    For instance, we have $R(0,1,1) = (1,0,1)$ as $R_0(1,1) = (0 - 2) \bmod n = 1$, $R_1(0,1) = (1 - 1) \bmod n = 0$, and $R_2(0,1) = (2 - 1) \bmod n = 1$.
    In particular, at each outcome $a \neq (0,0,0)$, there is precisely one player~$i$, namely $i = (\sum_j a_j) \bmod n$, for which $R_i(a_{-i}) = a_i$. 
    Analogous to \textsc{Part~I}, there is one happy and two unhappy players;
    we again have $\mathcal{E}(R'_i,R_{-i}) \subseteq \{ (0, 0, 0) \}$ for all unilateral deviations, so $R$ supports $(0, 0, 0)$ in reaction-function equilibrium.
\end{proof}

\setcounter{section}{4}
\setcounter{theorem}{0}
\begin{theorem} 
    If a safe outcome $\ubar{a} \in A$ of the strategic game is a Nash equilibrium, %
    then there is a safe reaction-function equilibrium.
    In general, if a safe reaction-function equilibrium exists, then there exists a Pareto-efficient safe reaction-function equilibrium.
\end{theorem}

\begin{proof}
    \noindent
    \textsc{Part I:}
    This follows by Observation~\ref{OBS:constant}.
    
    \bigskip \noindent
    \textsc{Part II:}
    Let profile $R$ be the safe reaction-function equilibrium that supports outcome~$\hat{a}$.
    Suppose that $\hat{a}$ can be Pareto improved, say by outcome~$a^*$.
    Define profile $R^*$ by matching $R$ everywhere except for $R^*_i(a^*_{-i}) = a^*_i$ for each player~$i$.
    Then $a^* \in \mathcal{E}(R^*)$ and $\mathcal{E}(R^*) \subseteq \mathcal{E}(R) \cup \{ a^* \}$.
    In particular, for each $a \in \mathcal{E}(R^*)$ and player~$i$, $u_i(a^*) \geq u_i(\hat{a}) \geq u_i(a)$.
    Hence, $R^*$ is unambiguous.
    As $u_i(a^*) \geq u_i(\hat{a}) \geq \ubar{v}_i$ and $R^*_i$ otherwise matches the safe~$R_i$, also each $R^*_i$ is safe.
    If, for contradiction, player $i$ deviates at $R^*_i$ through a constant $R'_i$ with outcome $a' \in \mathcal{E}(R'_i,R^*_{-i})$, then $a' \in \mathcal{E}(R'_i,R_{-i})$ as well.
    Hence, $i$ would deviate at $R$, which is a contradiction.
    Hence, $R^*$ is a safe reaction-function equilibrium supporting the outcome $a^*$ that Pareto improves on~$a$.
    If $a^*$ is not Pareto efficient, we can repeat the argument with respect to whichever outcome Pareto dominates~$a^*$.
\end{proof}

\setcounter{section}{5}
\setcounter{theorem}{0}
\setcounter{observation}{2}

\begin{observation}
    For symmetric investment games, a monotone reaction function is norm-proof if and only if it exhibits conditional collaboration.
\end{observation} 

\begin{proof}
    \noindent
    \textsc{Part I:}
    Let $R^*_i \in \mathcal{M}_i$ be norm-proof.
    Fix an amount $\alpha \in \{ 0, \dots, H \}$ and let $R \in \mathcal{N}$ be such that, for a generic player~$i$, $R_i(a_{-i}) = \alpha$ if $\min_{j \neq i} a_j \geq \alpha$ and otherwise $R_i(a_{-i}) = 0$.
    Then $(\alpha, \dots, \alpha) \in \mathcal{E}(R)$ and $U_i(R^*_i, R_{-i}) \geq U_i(R)$ requires $R^*_i(\alpha, \dots, \alpha) = \alpha$.
    As this applies for each $\alpha \in \{ 0, \dots, H \}$, we conclude that $R^*_i$ must coordinate if all others do.
    Next, let $a_{-i} \in A_{-i}$ be such that $\alpha \equiv \min_{j \neq i} a_j < \max_{j \neq i} a_j \equiv \beta$.
    As $R^*_i$ is monotone, $\alpha = R^*_i(\alpha, \dots, \alpha) \leq R_i^*(a_{-i}) \leq R^*_i(\beta, \dots, \beta) = \beta$.
    Hence, $R^*_i$ exhibits conditional collaboration.
    
    \bigskip \noindent
    \textsc{Part II:}
    Let $R^*_i$ exhibit conditional collaboration.
    For contradiction, suppose $R^*_i$ is not norm-proof, so there is $R \in \mathcal{N}$ for which $U_i(R^*_i, R_{-i}) < U_i(R)$.
    We claim that each $a \in \mathcal{E}(R)$ is coordinated:
    Without loss, say $a \in \mathcal{E}(R)$ is ordered such that $a_1 \leq \dots \leq a_n$.
    For contradiction, suppose $a_1 < a_n$.
    Then $a_{-1} \geqq a_{-n}$.
    As $R_i$ is monotone and $a \in \mathcal{E}(R)$, $a_1 = R_1(a_{-1}) \geq R_n(a_{-n}) = a_n$, which is a contradiction.
    Hence, $U_i(R) = u_i(\alpha, \dots, \alpha)$ for some $\alpha \in \{ 0, \dots, H \}$.
    But $R^*_i(\alpha, \dots, \alpha) = \alpha$, so $U_i(R^*_i, R_{-i}) \geq u_i(\alpha, \dots, \alpha) = U_i(R)$.
    This is a contradiction.
\end{proof}

\begin{theorem} 
    For weakest-link games, reaction function $R_i \in \mathcal{R}_i$ is norm-proof and payoff consistent if and only if $R_i = \textit{BR}_i$.
\end{theorem}

\begin{proof}
    Let $R_i \in \mathcal{R}_i$ be norm proof and payoff consistent. 
    Consider an arbitrary $a \in A$ and let $\alpha = \min_{j \neq i} a_j$.
    By payoff consistency, $R_i(a_{-i}) = R_i(\alpha, \dots, \alpha)$.
    By Observation~\ref{OBS:normproof}, as $R_i$ is norm-proof, $R_i(\alpha, \dots, \alpha) = \alpha$.
    Hence, $R_i(a_{-i}) = \min_{j \neq i} a_j = \textit{BR}_i(a_{-i})$.
\end{proof}

\begin{theorem} 
    For high-risk weakest-link games, reaction function $R_i \in \mathcal{R}_i$ is weakly dominant against monotone play if and only if $R_i = \textit{BR}_i$.
\end{theorem}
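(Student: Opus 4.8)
The plan is to establish both directions. For the ``if'' direction --- that $\textit{BR}_i$ is weakly dominant against monotone play --- I would fix an arbitrary $R_{-i} \in \mathcal{M}_{-i}$ and first pin down $U_i(\textit{BR}_i, R_{-i})$. Since $\textit{BR}_i(a_{-i}) = \min_{j \neq i} a_j$ is monotone, the map $(\textit{BR}_i, R_{-i})$ is a monotone self-map of the finite lattice $A$, so its fixed-point set is nonempty by Tarski's theorem; and at any fixed point $a$ one has $a_i = \min_{j \neq i} a_j \leq a_j$ for all $j$, hence $\min_j a_j = a_i$ and $u_i(a) = (\lambda - 1) a_i \geq 0$. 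In particular $U_i(\textit{BR}_i, R_{-i}) \geq 0$, and it equals $(\lambda - 1)$ times the largest $i$-coordinate occurring among the fixed points.

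Next, take any deviation $R'_i$ and let $a'$ be $i$'s preferred point in $\mathcal{E}(R'_i, R_{-i})$ (if this set is empty then $U_i(R'_i, R_{-i}) = -\infty$ and there is nothing to prove). I would split on whether $a'_i \leq \min_{j \neq i} a'_j$. In that case, note that replacing $R'_i$ by $\textit{BR}_i$ leaves the others' reactions --- and hence the components $a'_j$ for $j \neq i$ --- unchanged while raising $i$'s component to $\min_{j \neq i} a'_j \geq a'_i$; so $(\textit{BR}_i, R_{-i})$ maps $a'$ weakly above itself, and iterating this monotone map upward from $a'$ reaches a fixed point $a'' \geq a'$ of $(\textit{BR}_i, R_{-i})$. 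Then $u_i(a') = (\lambda - 1) a'_i \leq (\lambda - 1) a''_i = u_i(a'') \leq U_i(\textit{BR}_i, R_{-i})$. In the complementary case, write $m = \min_{j \neq i} a'_j$, so $m < a'_i \leq H$, hence $u_i(a') = \lambda m - a'_i \leq (\lambda - 1) m - 1 \leq (\lambda - 1)(H - 1) - 1$, which is strictly negative precisely because the high-risk condition $\lambda < H/(H-1)$ is equivalent to $(\lambda - 1)(H - 1) < 1$. Either way $U_i(R'_i, R_{-i}) \leq U_i(\textit{BR}_i, R_{-i})$.

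For the ``only if'' direction, suppose $R_i$ is weakly dominant against monotone play; I would show $R_i(c) = \min_{j \neq i} c_j$ for each $c \in A_{-i}$. Fix $c$, set $m = \min_{j \neq i} c_j$, and let $R_{-i}$ be the constant (hence monotone) profile with each $R_j$ identically equal to $c_j$. Then $(R_i, R_{-i})$ has the single fixed point $(R_i(c), c)$ and $(\textit{BR}_i, R_{-i})$ the single fixed point $(m, c)$, so weak dominance applied to the deviation $R'_i = \textit{BR}_i$ yields $\lambda \min\{R_i(c), m\} - R_i(c) \geq (\lambda - 1) m$. A two-line case check on whether $R_i(c) \leq m$, using only $\lambda > 1$, forces $R_i(c) = m$; since $c$ was arbitrary, $R_i = \textit{BR}_i$.

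The substantive step, and the only place the high-risk hypothesis enters, is the second case of the ``if'' direction: a clever deviation can produce a fixed point in which $i$'s action strictly exceeds the minimum of the others, and one must rule out that such a point pays $i$ more than the best fixed point reachable under $\textit{BR}_i$. The payoff bound itself is immediate from $a'_i \geq m + 1$ and $m \leq H - 1$, but converting it into a strict inequality against $U_i(\textit{BR}_i, R_{-i}) \geq 0$ is exactly what needs $\lambda < H/(H-1)$; indeed, without this restriction the statement fails (for $\lambda = 2$ and $H \geq 3$ one can build a monotone $R_{-i}$ against which $\textit{BR}_i$ is not optimal), so isolating this dependence cleanly is what keeps the argument short.
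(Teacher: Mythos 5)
Your proof is correct and follows essentially the same route as the paper's: monotonicity of $(\textit{BR}_i,R_{-i})$ plus iteration to get a (weakly better) fixed point, the observation that fixed points of $(\textit{BR}_i,R_{-i})$ pay $(\lambda-1)a_i \geq 0$, the high-risk bound $(\lambda-1)(H-1)<1$ to rule out profitable fixed points with $a'_i$ above the others' minimum, and constant opposing reactions for uniqueness. The only difference is cosmetic: you argue directly with a case split where the paper argues by contradiction.
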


\begin{proof}
    \textsc{Part I:}
    We show first that $\textit{BR}_i$ is weakly dominant against monotone play.
    For contradiction, suppose not.
    Then there is $R \in \mathcal{R}$ and $a \in \mathcal{E}(R)$ such that $u_i(a) = U_i(R) > U_i(\textit{BR}_i,R_{-i})$.
    We claim that, as $\textit{BR}_i$ and $R_{-i}$ are monotone, $\mathcal{E}(\textit{BR}_i,R_{-i}) \neq \emptyset$.

    To see this, let $R \in \mathcal{R}$ be such that each $R_i$ is monotone. 
    Let $a^0 = (H,\dots,H)$, $a^1 = R(a^0)$, $a^2 = R(a^1)$, and so on.
    This defines a monotone sequence, $a^0 \geqq a^1 \geqq \dots$, on a finite set $A$ bounded below by $(0,\dots,0)$.
    At some step~$k$, we have $a^k = a^{k+1} = R(a^k)$, so $a^k \in \mathcal{E}(R) \neq \emptyset$.
    
    Next, as reaction $\textit{BR}_i$ is safe, $U_i(\textit{BR}_i,R_{-i}) \geq \ubar{v}_i = u_i(0,\dots,0) = 0$.
    Hence, $u_i(a) > 0$.
    With high-risk payoffs, $\lambda < H / (H-1)$, this requires $a_i \leq \min_{j \neq i} a_j = \textit{BR}_i(a_{-i})$.
    To see this, if we had $a_i > \min_{j \neq i} a_j$, then
    \[
        \textstyle
        u_i(a) 
        = \lambda \cdot \min_j a_j - a_i
        \leq \lambda \cdot \min_j a_j - (\min_j a_j + 1)
        \leq \lambda \cdot (H - 1) - H
        < 0. 
    \]
    Hence, $a_i \leq \textit{BR}_i(a_{-i})$.
    If this holds with equality, then $a \in \mathcal{E}(\textit{BR}_i,R_{-i})$ as well, which would contradict $u_i(a) > U_i(\textit{BR}_i,R_{-i})$.
    Hence, $a_i < \textit{BR}_i(a_{-i})$.

    Define now $a^0 = (a^0_i,a_{-i}) \in A$ with $a^0_i = \textit{BR}_i(a_{-i})$.
    Hence, $a^0 \geqq a$ with $a^0_i > a_i$.
    For $j \neq i$, as $R_j$ is monotone and $a \in \mathcal{E}(R)$, $a^1_j \equiv R_j(a^0_{-j}) \geq R_j(a_{-j}) = a_j = a^0_j$.
    Let $a^1_i = \textit{BR}_i(a^0_{-i}) = a^0_i$.
    Hence, $a^1 \geqq a^0 \geqq a$ with $a^1_i > a_i$.
    For $k \in \mathbb{N}$, define $a^{k+1} \in A$ through $a^{k+1} = (\textit{BR}_i,R_{-i})(a^k)$.
    This defines a monotone sequence, $a^0 \leqq a^1 \leqq \dots$, that is bounded above by $(H,\dots,H)$ and thus converges at some step $k$ with $a^k = a^{k+1} = (\textit{BR}_i,R_{-i})(a^k)$.
    That is, $a^k \geqq a$ is a fixed point of $(\textit{BR}_i,R_{-i})$ with $a^k_i > a_i$.
    Hence, $a^k_i = \textit{BR}_i(a^k_{-i}) = \min_j a^k_j$.
    As noted above, $a_i = \min_j a_j$.
    But then $u_i(a) = (\lambda - 1) a_i < (\lambda - 1) a^k_i = u_i(a^k) \leq U_i(\textit{BR}_i,R_{-i})$, which is a contradiction.

    \bigskip \noindent
    \textsc{Part II:}
    We now show uniqueness.
    Let $R_i \in \mathcal{R}_i$ be weakly dominant against monotone play, so, in particular, $U_i(R) \geq U_i(\textit{BR}_i,R_{-i})$ for all monotone $R_{-i} \in \mathcal{R}_{-i}$.
    Consider an arbitrary $a_{-i} \in A_{-i}$ and let $R_{-i} \in \mathcal{R}_{-i}$ be the constant (and thus monotone) reactions in which each player $j \neq i$ always takes action~$a_j$.
    If $R_i(a_{-i}) > \textit{BR}_i(a_{-i})$, then
    \[
        \textstyle U_i(R) = \lambda \cdot \min_j a_j - R_i(a_{-i}) < \lambda \cdot \min_j a_j - \textit{BR}_i(a_{-i}) = U_i(\textit{BR}_i,R_{-i}),
    \]
    which is a contradiction.
    The same applies if $R_i(a_{-i}) < \textit{BR}_i(a_{-i})$ as then
    \[
        \textstyle U_i(R) = (\lambda - 1) \cdot R_i(a_{-i}) < (\lambda - 1) \cdot \textit{BR}_i(a_{-i}) = U_i(\textit{BR}_i,R_{-i}).
    \]
    Hence, we must have $R_i(a_{-i}) = \textit{BR}_i(a_{-i})$ for each $a_{-i} \in A_{-i}$, so $R_i = \textit{BR}_i$.
\end{proof}

\begin{theorem} \label{TH:PGchar}
    For public-good games, reaction function $R_i \in \mathcal{M}_i$ is norm-proof and payoff consistent if and only if there is a monotone function $f_i \colon \mathbb{R} \to \mathbb{Z}$ such that, %
    for each $a_{-i} \in A_{-i}$,
    \[
        R_i(a_{-i}) 
        = f_i \left (\frac{1}{n-1} \textstyle \sum_{j \neq i} a_j \right )
        \in \left \{
            {\left \lfloor \frac{1}{n-1} \textstyle \sum_{j \neq i} a_j \right \rfloor}, 
            {\left \lceil \frac{1}{n-1} \textstyle \sum_{j \neq i} a_j \right \rceil}
            \right \}.
    \]
\end{theorem}

\begin{proof}
    Let $R_i \in \mathcal{M}_i$ be norm-proof and payoff consistent.
    Consider an arbitrary $a \in A$ and let 
    \[
        m = \left \lfloor \frac{1}{n-1} \textstyle \sum_{j \neq i} a_j \right \rfloor.
    \]
    We can ``flatten'' $a$ to $a' \in A$ with $a'_{-i} = (m,\dots,m,m+1,\dots,m+1)$ and $\sum_{j \neq i} a'_j = \sum_{j \neq i} a_j$.
    By payoff consistency, $R_i(a_{-i}) = R_i(a'_{-i})$.
    If $a'_{-i} = (m, \dots, m)$, then, by Observation~\ref{OBS:normproof} as $R_i$ is norm-proof, $R_i(a'_{-i}) = m$ and thus $R_i(a_{-i}) = m$, as desired.
    If not, so $a'_{-i} \neq (m,\dots,m)$, then $a'_{-i}$ contains at least one $m$ and at least one $m+1$.
    By monotonicity, $R_i(m,\dots,m) \leq R_i(a'_{-i}) \leq R_i(m+1,\dots,m+1)$.
    By Observation~\ref{OBS:normproof}, $R_i(m,\dots,m) = m$, $R_i(m+1,\dots,m+1) = m+1$, and thus $R_i(a'_{-i}) \in \{m, m+1\}$.       Hence, $R_i(a_{-i}) \in \{m, m+1\}$, as desired.
\end{proof}

\begin{theorem} \label{TH:PG}
    For high-risk public-good games, $R_i \in {\cal M}_i$ is welfare-maximizing conditional on safe play if and only if  \[
        R_i^*(a_{-i}) = \left \lfloor \frac{1}{n-1} \textstyle \sum_{j \neq i} a_j \right \rfloor.
    \]
\end{theorem}

\begin{proof}
    As $\lambda > 1/n$, welfare is increasing in player $i$'s investment~$a_i$:
    \[
        \textstyle \sum_j u_j(a) = n \lambda \sum_j a_j - \sum_j a_j.
    \]
    Hence, we only need to identify the highest safe investment.
    We will show that this is given by $R^*_i(a_{-i})$.
    By definition, the reaction $a_i = R_i(a_{-i})$ to actions $a_{-i}$ is safe whenever $u_i(a) = \lambda \sum_j a_j - a_i \geq \ubar{v}_i = 0$.
    Equivalently,
    \[
        \textstyle
        a_i \leq \lambda \sum_j a_j 
        \iff (1 - \lambda) a_i \leq \lambda \sum_{j \neq i} a_j 
        \iff a_i \leq \displaystyle \frac{\lambda}{1 - \lambda} \textstyle \sum_{j \neq i} a_j.
    \]
    As $\lambda > 1/n$, $\lambda / (1 - \lambda) > 1 / (n-1)$.
    Hence, $R^*_i$ is safe:
    \[
        R^*_i(a_{-i}) 
        = \left \lfloor \frac{1}{n-1} \textstyle \sum_{j \neq i} a_j \right \rfloor
        \leq \displaystyle \frac{1}{n-1} \textstyle \sum_{j \neq i} a_j 
        < \displaystyle \frac{\lambda}{1 - \lambda} \textstyle \sum_{j \neq i} a_j.
    \]
    Let now $a_i > R^*_i(a_{-i})$, that is, $a_i$ exceeds the average.
    Then $\sum_{j \neq i} a_j \leq (n-1) a_i - 1$.
    For high-risk public-good games, $\lambda < H / (nH - 1)$.
    As the fraction $x / (n x - 1)$ is decreasing in~$x$, we have $H / (nH - 1) \leq a_i / (n a_i - 1)$.
    Furthermore,
    \[
        \frac{\lambda}{1 - \lambda} < \frac{a_i}{(n-1) a_i - 1}.
    \]
    Hence,
    \[
        \frac{\lambda}{1 - \lambda} \textstyle \sum_{j \neq i} a_j
        < \displaystyle \frac{a_i}{(n-1) a_i - 1} \cdot \left ( (n-1) a_i - 1 \right )
        = a_i.
    \]
    As desired, this shows that $a_i > R^*_i(a_{-i})$ is not safe.
    Hence, $R^*_i(a_{-i})$ is the highest safe investment.
\end{proof}

\section*{Appendix: Further remarks on Theorem~\ref{TH:moreplayers}} \label{APP:remarks}

Suppose there are three players with two actions each, $A_i = \{0,1\}$.
Suppose that, for each player~$i$ and outcome $a \neq (1,1,1)$, $u_i(a) > u_i(1,1,1)$.
By contradiction, suppose that we can support $(1,1,1)$ in reaction-function equilibrium.
As $(1,1,1)$ is uniquely worst, there cannot be any other fixed point.
Moreover, if there is an outcome $a$ for which $R(a)$ differs from $a$ in only one coordinate, say $R_i(a_{-i}) \neq a_i$, then player $i$ beneficially deviates through the constant reaction $R'_i$ associated with action $a_i$ as $a \in \mathcal{E}(R'_i,R_{-i})$.
Hence, for each $a \neq (1,1,1)$, the reaction $R(a)$ must differ from $a$ in at least two coordinates.
We use this observation in each step below:
\begin{enumerate}[noitemsep]
    \item For the reaction $R(0,0,0)$ to differ in at least two coordinates, there are at least two players $i$, without loss players $1$ and $2$, with $R_i(0,0) = 1$.
    \item As $R_1(0,0) = 1$, we must have $R(1,0,0) = (1,1,1)$.
    That is, $R_3(1,0) = 1$.
    \item As $R_3(1,0) = 1$, we must have $R(1,0,1) = (0,1,1)$.
    That is, $R_1(0,1) = 0$.
\end{enumerate}
In Step 1, we concluded that $R_1(0,0) = 1$, but also that $R_2(0,0) = 1$.
Hence, we can take Steps 2 and 3 from player $2$'s perspective as well.
By symmetry, going through outcomes $(0,1,0)$ and $(0,1,1)$, we get to $(0,0,1)$ and conclude that $R_2(0,1) = 0$.
But then $R(0,0,1) = (0,0,\cdot)$, which differs in at most one coordinate.
This is a contradiction.

\section*{Appendix: Simulation study}
    We complement the theoretical findings with simulations.
    The setup matches \citet{Fischbacheretal2001}, namely a four-player public-goods game with actions $A_i = \{0, \dots, 20\}$ and payoffs
    \[
        \textstyle u_i(a) = 20 - a_i + 0.4 \sum_j a_j.
    \]
    We restrict to monotone, payoff-consistent reaction functions.
    These can conveniently be represented by non-decreasing functions $R_i \colon \{ 0, \dots, 60 \} \to \{ 0, \dots, 20 \}$, where $i$ commits to contribute $R_i(\alpha)$ conditional on the total contribution of the three other players being~$\alpha$.
    We maintain a list of $500$ such functions, which evolves as worst-performing reactions get replaced by ``mutated'' variants of the best-performing ones.
    Initially, the list consists solely of pure free-riders, but through experimentation, new and more effective functions emerge and gradually displace the original ones.
    Once this process converges, the theoretical prediction is that most reaction functions will be alike the norm-proof $R^*_i$ (Theorem~\ref{TH:PGchar}).
    \emph{The main take-away from the simulation study is that it gives further support to this prediction.}
    
    The flow of the simulation is as follows:
    \begin{enumerate}[noitemsep]
        \item There are $100$ independent runs, each starting with a list of pure free-riding.
        \item A run is composed of $15\ 000$ batches of games.
        Mutation occurs after each batch based on average payoffs within the batch.
        \item A batch is composed of $10\ 000$ games.
        In each game, four reaction functions are drawn at random from the list (with replacement) and payoffs of the highest fixed point are realized.
        As reactions are monotone, this is easy to compute.
    \end{enumerate}
    In total, the simulations cover $100 \times 15\ 000 \times 10\ 000 = 15$ billion games.
    
    To capture convergence, mutation frequency and impact decreases exponentially.
    That is, the number of replaced worst-performing reaction functions is smaller in later batches.
    Moreover, the replacements become more similar to the top-performing reaction functions.
    (If $R_i$ is the top-performing reaction to copy, then the mutated $R'_i$ is such that, for each amount $\alpha \in \{0, \dots, 60\}$, we draw $\varepsilon$ from $\{ -1, 0, 1 \}$ and set $R'_i(\alpha) = R_i(\alpha) + \varepsilon$.
    By making $\varepsilon = 0$ more likely, $R'_i$ becomes more similar to $R_i$.
    If needed, we modify $R'_i$ to make it monotone and within bounds, $0 \leq R'_i(\alpha) \leq 20$.)
    
    Although the functions are defined for all $\alpha \in \{0, \dots, 60\}$, mainly higher values of $\alpha$ tend to be reached during play in later batches (i.e., at the highest fixed point).
    As a result, the replacement step is disproportionately focused on these higher values and the function's behavior for smaller values, say $\alpha < 45$, is likely under-optimized and a bit more ``noisy''.
    
    \begin{figure}[!htb]
        \centering
        \begin{tikzpicture}
            \begin{axis}[
                width=14cm,
                height=9cm,
                grid=major,
                xlabel={Total contribution $\alpha \in \{ 0, \dots, 60 \}$ by others},
                ylabel={Contribution $R_i(\alpha) \in \{ 0, \dots, 20 \}$},
                xmin=0, xmax=60,
                ymin=0, ymax=20,
            ]
            
            \addplot+[no markers, color=gray, densely dashed] table [col sep=comma, x index = 0, y index = 1] {RFE-sims.csv};
            \addplot+[only marks, mark=o, mark size=2pt, color=black] table [col sep=comma, x index = 0, y index = 2] {RFE-sims.csv};
            \addplot+[no markers, color=black, line width=1pt] table [col sep=comma, x index = 0, y index = 3] {RFE-sims.csv};
            \addplot+[only marks, mark=*, mark size=1pt, color=black] table [col sep=comma, x index = 0, y index = 4] {RFE-sims.csv};
            \end{axis}
        \end{tikzpicture}
        \caption{%
            The dashed line is to match the average contribution of the others.
            Open circles is $R^*_i$, to round down the average.
            The solid curve shows simulation averages and filled dots are simulation median values.
            Simulation medians match $R^*_i$ whenever the dot is encircled.}
        \label{FIG:sims}
    \end{figure}
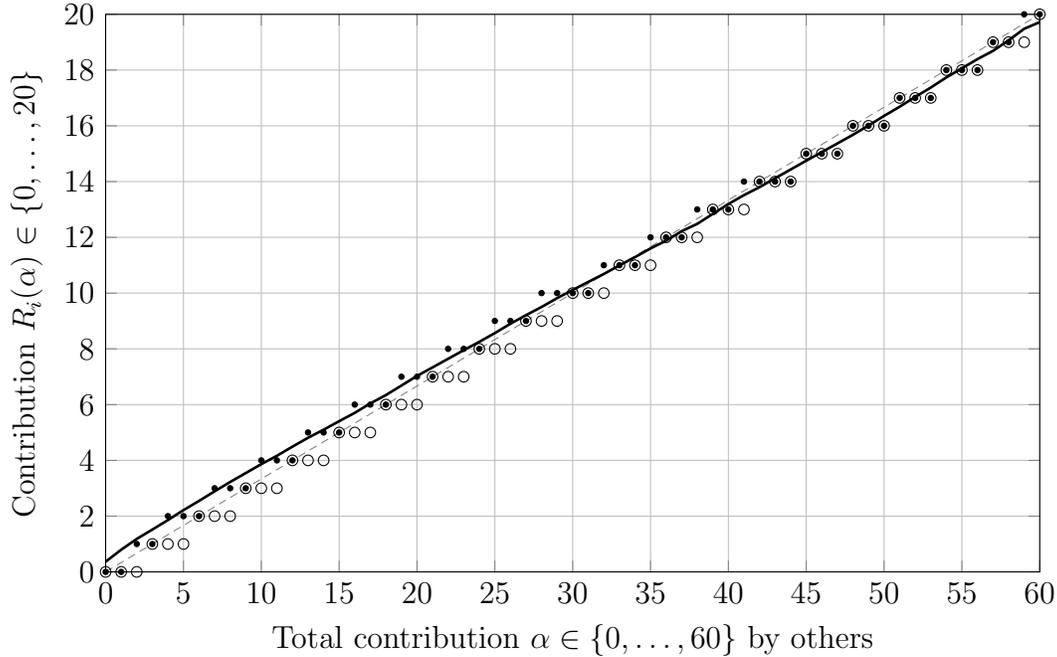
    
    Figure~\ref{FIG:sims} summarizes the $100 \times 500 = 50\ 000$ reaction functions that remain at the end of the runs.
    The solid curve shows the average across these functions.
    It tracks the average contribution of the others (dashed line) and, importantly, is consistently below it for high~$\alpha$. 
    Figure~\ref{FIG:sims} also includes ``round down the average'' $R^*_i$ (open circles) and simulation medians (filled dots).
    These match up remarkably well in the higher end (except at $\alpha = 59$).
\end{document}